\newtheorem{theorem}{Theorem}[section]
\newtheorem*{theorem*}{Theorem}
\newtheorem*{proposition*}{Proposition}
\newtheorem{lemma}[theorem]{Lemma}
\newtheorem*{lemma*}{Lemma}
\newtheorem*{conjecture*}{Conjecture}
\newtheorem*{fact*}{Fact}
\newtheorem*{hypothesis*}{Hypothesis}
\theoremstyle{definition}
\newtheorem*{definition*}{Definition}
\newtheorem*{condition*}{Condition}
\theoremstyle{remark}
\newtheorem*{claim*}{Claim}
\newtheorem*{remark*}{Remark}
\newtheorem*{observation*}{Observation}
\crefname{lemma}{Lemma}{Lemmas}
\crefname{fact}{Fact}{Facts}
\crefname{theorem}{Theorem}{Theorems}
\crefname{corollary}{Corollary}{Corollaries}
\crefname{claim}{Claim}{Claims}
\crefname{example}{Example}{Examples}
\crefname{problem}{Problem}{Problems}
\crefname{definition}{Definition}{Definitions}
\crefname{conjecture}{Conjecture}{Conjectures}
\newcommand\bdot\bullet
\DeclareMathOperator{\poly}{poly}
\newcommand{\Hoelder}{H\"{o}lder\xspace}
\newcommand{\R}{\mathbb R}
\let\epsilon=\varepsilon
\numberwithin{equation}{section}
\newcommand\MYcurrentlabel{xxx}
\newcommand{\MYstore}[2]{%
  \global\expandafter \def \csname MYMEMORY #1 \endcsname{#2}%
}
\newcommand{\MYload}[1]{%
  \csname MYMEMORY #1 \endcsname%
}
\newcommand{\MYnewlabel}[1]{%
  \renewcommand\MYcurrentlabel{#1}%
  \MYoldlabel{#1}%
}
\newcommand{\MYdummylabel}[1]{}
\newcommand{\torestate}[1]{%
  \let\MYoldlabel\label%
  \let\label\MYnewlabel%
  #1%
  \MYstore{\MYcurrentlabel}{#1}%
  \let\label\MYoldlabel%
}
\newcommand{\restatetheorem}[1]{%
  \let\MYoldlabel\label
  \let\label\MYdummylabel
  \begin{theorem*}[Restatement of \cref{#1}]
    \MYload{#1}
  \end{theorem*}
  \let\label\MYoldlabel
}
\newcommand{\restatelemma}[1]{%
  \let\MYoldlabel\label
  \let\label\MYdummylabel
  \begin{lemma*}[Restatement of \cref{#1}]
    \MYload{#1}
  \end{lemma*}
  \let\label\MYoldlabel
}
\newcommand{\restateprop}[1]{%
  \let\MYoldlabel\label
  \let\label\MYdummylabel
  \begin{proposition*}[Restatement of \cref{#1}]
    \MYload{#1}
  \end{proposition*}
  \let\label\MYoldlabel
}
\newcommand{\restatefact}[1]{%
  \let\MYoldlabel\label
  \let\label\MYdummylabel
  \begin{fact*}[Restatement of \prettyref{#1}]
    \MYload{#1}
  \end{fact*}
  \let\label\MYoldlabel
}
\newcommand{\restate}[1]{%
  \let\MYoldlabel\label
  \let\label\MYdummylabel
  \MYload{#1}
  \let\label\MYoldlabel
}
\newcommand*{\loweredwidetildehelper}[2]{\hbox{\csname dimen@\endcsname\accentfontxheight#1%
  \accentfontxheight#11.25\csname dimen@\endcsname
  $\csname m@th\endcsname#1\widetilde{#2}$%
  \accentfontxheight#1\csname dimen@\endcsname
  }%
}
\newcommand*{\accentfontxheight}[1]{\fontdimen5\ifx#1\displaystyle \textfont \else\ifx#1\textstyle \textfont \else\ifx#1\scriptstyle \scriptfont \else \scriptscriptfont \fi\fi\fi3
}
\title{A Potential Reduction Inspired Algorithm for Exact Max Flow in Almost $\widetilde{O}(m^{4/3})$ Time}
\author{
Tarun Kathuria\thanks{U.C. Berkeley,  {tarunkathuria@berkeley.edu}, supported by NSF Grant CCF 1718695.}
}
\begin{document}

\pagestyle{empty}


\maketitle
\thispagestyle{empty} 


\begin{abstract}

    We present an algorithm for computing $s$-$t$ maximum flows in directed graphs in $\widetilde{O}(m^{4/3+o(1)}U^{1/3})$ time. Our algorithm is inspired by potential reduction interior point methods for linear programming. Instead of using scaled gradient/Newton steps of a potential function, we take the step which maximizes the decrease in the potential value subject to advancing a certain amount on the central path, which can be efficiently computed. This allows us to trace the central path with our progress depending only $\ell_\infty$ norm bounds on the congestion vector (as opposed to the $\ell_4$ norm required by previous works) and runs in $O(\sqrt{m})$ iterations. To improve the number of iterations by establishing tighter bounds on the $\ell_\infty$ norm, we then consider the weighted central path framework of Madry \cite{M13,M16,CMSV17} and Liu-Sidford \cite{LS20}. Instead of changing weights to maximize energy, we consider finding weights which maximize the maximum decrease in potential value. Finally, similar to finding weights which maximize energy as done in \cite{LS20} this problem can be solved by the iterative refinement framework for smoothed $\ell_2$-$\ell_p$ norm flow problems \cite{KPSW19} completing our algorithm. We believe our potential reduction based viewpoint provides a versatile framework which may lead to faster algorithms for max flow. 

\end{abstract}

\clearpage


\setcounter{tocdepth}{1}

  \microtypesetup{protrusion=false}
  \microtypesetup{protrusion=true}

\clearpage

\pagestyle{plain}
\setcounter{page}{1}


\section{Introduction}\label{sec:introduction}

The $s$-$t$ maximum flow problem and its dual, the $s$-$t$ minimum cut on graphs are amongst the most fundamental problems in combinatorial optimization with a wide range of applications.  Furthermore, they serve as a testbed for new algorithmic concepts which have found uses in other areas of theoretical computer science and optimization. This is because the max-flow and min-cut problems demonstrate the prototypical primal-dual relation in linear programs. In the well-known $s$-$t$ maximum flow problem we are given a graph $G=(V,E)$ with $m$ edges and $n$ vertices with edge capacities $u_e \leq U$, and aim to route as much flow as possible from $s$ to $t$ while restricting the magnitude of the flow on each edge to its capacity.\\

Several decades of work in combinatorial algorithms for this problem led to a large set of results culminating in the work of Goldberg-Rao \cite{GR98} which gives a running time bound  of $O(m \min\{m^{1/2},n^{2/3}\}\log(\frac{n^2}{m})\log U)$. This bound remained unimproved for many years. In a breakthrough paper, Christiano et al \cite{CKMST11} show how to compute approximate maximum flows in $\widetilde{O}(mn^{1/3}\log(U)\mathsf{poly}(1/\varepsilon))$. Their new approach uses electrical flow computations which are Laplacian linear system solves which can be solved in nearly-linear time \cite{ST14} to take steps to minimize a softmax approximation of the congestion of edges via a second order approximation. A straightforward analysis leads to a $O(\sqrt{m})$ iteration algorithm. However, they present an insight by trading off against another potential function and show that $O(m^{1/3})$ iterations suffice. This work led to an extensive line of work exploiting Laplacian system solving and continuous optimization techniques for faster max flow algorithms. Lee et al. \cite{LRS13} also present another $O(n^{1/3}\poly(1/\varepsilon))$ iteration algorithm for unit-capacity graphs also using electrical flow primitives. Finally Kelner et al. \cite{KLOS14} and Sherman \cite{Sherman13,Sherman17a} present algorithms achieving $O(m^{o(1)}\poly(1/\varepsilon))$ iteration algorithm for max-flow and its variants, which are based on congestion approximators and oblivious routing schemes as opposed to electrical flow computations. This has now been improved to near linear time \cite{Peng16,Sherman17b}. Crucially this line of work can only guarantee weak approximations to max flow due to the $\poly(1/\varepsilon)$ in the iteration complexity.\\

In order to get highly accurate solutions which depend only polylogarithmically on $1/\varepsilon$, work has relied on second-order optimization techniques which use first and second-order information (the Hessian of the optimization function). To solve the max flow problem to high accuracy, several works have used interior point methods (IPMs) for linear programming \cite{NN94,Ren01}. These algorithms approximate non-negativity/$\ell_\infty$ constraints by approximating them by a \textit{self-concordant} barrier, an approximation to an indicator function of the set which satisfies local smoothness and strong convexity properties and hence can be optimized using Newton's method. In particular, Daitch and Spielman \cite{DS08} show how to combine standard path-following IPMs and Laplacian linear system solves to obtain $\widetilde{O}(m\sqrt{m}\log (U/\varepsilon))$ iterations, matching Goldberg and Rao up to logarithmic factors. The $O(\sqrt{m})$ iterations is a crucial bottleneck here due to the $\ell_\infty$ norm being approximated by $\ell_2$ norm to a factor of $\sqrt{m}$. Then Lee and Sidford \cite{LS14} devised a faster IPM using weighted logarithmic barriers to achieve a $\widetilde{O}(m\sqrt{n}\log(U/\varepsilon)$ time algorithm. Madry \cite{M13,M16} opened up the weighted barriers based IPM algorithms for max flow to show that instead of $\ell_2$ norm governing the progress of each iteration, one can actually make the progress only maintaining bounds on the $\ell_4$ norm. Combining this with insights from \cite{CKMST11}, by using another potential function, which again depends on the energy of the next flow step and carefully tuning the weights in the barriers, he achieved an  $\widetilde{O}(m^{3/7})$ iteration algorithm which leads to a $\widetilde{O}(m^{11/7}U^{1/7}\log(m/\varepsilon))$ time. Note that the algorithm depends polynomially on the maximum capacity edge $U$ and hence is mainly an improvement for mildly large edge capacities. This work can also be used to solve min cost flow problems in the same running time \cite{CMSV17}.\\

Another line of work beyond IPMs is to solve $p$-norm regression problems on graphs. Such problems interpolate between electrical flow problems $p=2$, maximum flow problems $p=\infty$ and transshipment problems $p=1$. While these problems can also be solved in $O(\sqrt{m})$ iterations to high accuracy using IPMs\cite{NN94}, it was unclear if this iteration complexity could be improved depending on the value of $p$. Bubeck et al. \cite{BCLL18} showed that for any self-concordant barrier for the $\ell_p$ ball, the iteration complexity has to be at least $O(\sqrt{m})$ thus making progress using IPMs unlikely. They however showed another \textit{homotopy-based} method, of which IPMs are also a part of, can be used to solve the problem in $\widetilde{O}_p(m^{\frac{1}{2}-\frac{1}{p}}\log(1/\varepsilon))$ iterations, where $O_p$ hides dependencies on $p$ in the runtime. This leads to improvements on the runtime for constant values of $p$. Next, Adil et al. \cite{AKPS19}, inspired by the work of \cite{BCLL18} showed that one can measure the change in $p$-norm using a second order term based on a different function which allows them to obtain approximations to the $p$-norm function in different norms with strong condition number. These results can be viewed in the framework of relative convexity \cite{LFN18}. Thus, they can focus on just solving the optimization problem arising from the residual. Using insights from $\cite{CKMST11}$, they arrive at a $\widetilde{O}_p(m^{4/3}\log(1/\varepsilon)$-time algorithm. Then follow-up work by Kyng et al. \cite{KPSW19} opened up the tools used by Spielman and Teng \cite{ST14} for $\ell_2$-norm flow problems to show that one can construct strong preconditioners for the residual problems for mixed $\ell_2$-$\ell_p$-norm flow problems, a generalization of $\ell_p$-norm flow and obtain an $\widetilde{O}_p(m^{1+o(1)}\log(1/\varepsilon)$ algorithm. These results however do not lead to faster max flow algorithms however due to their large dependence on $p$.\\

However, Liu and Sidford \cite{LS20} improving on Madry \cite{M16} showed that instead of carefully tuning the weights based on the electrical energy, one can consider the separate problem of finding a new set of weights under a certain budget constraint to maximize the energy. They showed that a version of this problem reduce to solving $\ell_2$-$\ell_p$ norm flow problems and hence can be solved in almost-linear time using the work of \cite{KPSW19,AS20}. This leads to a $O(m^{11/8+o(1)}U^{1/4})$-time algorithm for max flow. However, this result still relies on the amount of progress one can take in each iteration being limited to the bounds one can ensure on the $\ell_4$ norm of the congestion vector, as opposed to the ideal $\ell_\infty$ norm.  We remark here that there are IPMs for linear programming which only measure centrality in $\ell_\infty$ norm as opposed to the $\ell_2$ or $\ell_4$ norm. In particular \cite{CLS19,LSZ19,BLSS20} show how to take a step with respect to a softmax function of the duality gap and trace the central path only maintaining $\ell_\infty$ norm bounds. \cite{Tuncel95,Tuncel94} also designed potential reduction based IPMs which trace the central path only maintaining centrality in $\ell_\infty$.

\subsection{Our Contribution}
In this paper, we devise a faster interior point method for $s$-$t$ maximum flow in directed graphs. Precisely, our algorithm runs in time $\widetilde{O}(m^{4/3+o(1)}U)$. During the process of writing this paper, we were informed by Yang Liu and Aaron Sidford \cite{LS20b} that they have also obtained an algorithm achieving the same runtime. They also end up solving the same subproblems that we will end up solving, although they arrive at it from the perspective of considering the Bregman divergence of the barrier as opposed to considering the potential funcion that is the inspiration for our work.  Our algorithm builds on top of both Madry \cite{M16} and Liu-Sidford \cite{LS20} and is arguably simpler than both in some regards. \\

In particular, our algorithm is based on potential reduction algorithms which are a kind of interior point methods for linear programs. These algorithms are based on a potential function which measures both the duality gap as well as accounts for closeness to the boundary via a barrier function. The algorithms differ from path-following IPMs in that they have the potential to not strictly follow the path closely but only trace it loosely, which is also experimentally observed. Usually, the step taken is a scaled gradient step/Newton step on the potential function. Provided that we can guarantee sufficient decrease of the potential function and relate the potential function to closeness to optimality, we can show convergence. We refer to \cite{Ans96,Todd96,NN94} for excellent introductions to potential reduction IPMs. \\

We will however use a different step; instead of a Newton step, we consider taking the step, subject to augmenting a certain amount of flow in each iteration, which maximizes the decrease in the potential function after taking the step. We then show that this optimization problem can be efficiently solved in $\widetilde{O}(m)$ time using electrical flow computations. While we can show that the potential function decreases by a large amount which guarantees that we can solve the max flow problem in $O(\sqrt{m})$ iterations, we forego writing it in this manner as we are unable to argue such a statement when the weights and hence the potential function is also changed. Instead, we stick to keeping track of the centrality of our flow vector while making sufficient progress. Crucially however, the amount of progress made by our algorithm only depends on bounds on the $\ell_\infty$ of the congestion vector of the update step rather than the traditional $\ell_2$ or $\ell_4$ norm bounds in \cite{M16,LS20}.  In order to improve the iteration complexity by obtaining stronger bounds on the $\ell_\infty$ norm of the congestion vector, we show that like in Liu-Sidford \cite{LS20}, we can change weights on the barrier term for each edge. Instead of using energy as a potential function to be maximized, inspired by oracles designed for multiplicative weights algorithms, we use the change in the potential function itself as the quantity to be maximized subject to a $\ell_1$ budget constraint on the change in weights. While we are unaware of how to maximize the $\ell_1$ constrained problem, we relax it to an $\ell_q$ constrained problem, which we solve using a mixed $\ell_2$-$\ell_p$ norm flow problem using the work of \cite{KPSW19,AS20}. Combining this with an application of \Hoelder's inequality gives us sufficiently good control on the $\ell_1$-norm of the weight change while ensuring that our step has significantly better $\ell_\infty$ norm bounds on the congestion vector. We believe our potential reduction framework as well as the concept of changing weights based on the update step might be useful in designing faster algorithms for max flow beyond our $m^{4/3}$ running time.

\section{Preliminaries}\label{sec:prelims}
Throughout this paper, we will view graphs as having both forward and backward capacities. Specifically, we will denote by $G=(V,E,\bf{u})$, a directed graph with vertex set $V$ of size $n$, an edge set $E$ of size $m$, and two non-negative capacities $u_e^-$ and $u_e^+$ for each edge $e\in E$. For the purpose of this paper, all edge capacities are bounded by $U=1$. Each edge $e=(u,v)$ has a head vertex $u$ and a tail vertex $v$. For a vector $v \in \R^m$, we define $\|v\|_p=(\sum\limits_{i=1}^{m}|v_i|^p)^{1/p}$ and $\|v\|_\infty = \max\limits_{i=1}^{m}|v_i|$ and refer to $\text{Diag}(v)\in\R^{m \times m}$ as the diagonal matrix with the $i^{th}$ diagonal entry equal to $v_i$.

\textbf{Maximum Flow Problem} Given a graph $G$, we call any assignment of real values to the edges of $E$, i.e., $f\in\R^m$, a flow. For a flow vector $f$, we view $f_e$ as the amount of the flow on edge $e$ and if this value is negative, we interpret it as having a flow of $|f_e|$ flowing in the direction opposite to the edge's orientation. We say that a flow $f$ is an $\sigma$-flow, for some demands $\sigma\in\R^n$ iff it satisfies \textit{flow conservation constraints} with respect to those demands. That is, we have
\begin{align*}
    \sum\limits_{e \in E^+(v)}f_e-\sum\limits_{e \in E^-(v)}f_e = \sigma_v \ \text{for every vertex } v \in V
\end{align*}
 where $E^+(v)$ and $E^-(v)$ is the set of edges of $G$ that are entering and leaving vertex $v$ respectively. We will require $\sum\limits_{v \in V} \sigma_v=0$.

 Furthermore, we say that a $\sigma$-flow $f$ is feasible in $G$ iff $f$ satisfies the capacity constraints
 \begin{align*}
     -u_e^-\leq f_e \leq u_e^+ \ \text{for each edge } e \in E
 \end{align*}
 One type of flows that will be of interest to us are $s-t$ flows, where $s$ (the \textit{source}) and $t$(the \textit{sink}) are two distinguishing vertices of G. Formally, an $s-t$ flow is a $\sigma$-flow whose demand vector $\sigma=F\chi_{s,t}$, where $F$ is the value of the flow and $\chi_{s,t}$ is a vector with $-1$ and $+1$ at the coordinates corresponding to $s$ and $t$ respectively and zero elsewhere.

 Now, the maximum flow problem corresponds to the problem in which we are given a directed graph $G=(V,E,u)$ with integer capacities as well as a source vertex $s$ and a sink vertex $t$ and want to find a feasible $s$-$t$ flow of maximum value. We will denote this maximum value $F^*$

 \textbf{Residual Graphs} A fundamental object in many maximum flow algorithms is the notion of a residual graph. Given a graph $G$ and a feasible flow $\sigma$-flow $f$ in that graph, we define the \textit{residual graph} $G_f$ as a graph $G=(V,E,\hat{u}(f))$ over the same vertex and edge set as $G$ and such that, for each edge $e=(u,v)$, it's forward and backward residual capacities are defined as
 \begin{align*}
     \hat{u}^+_e(f)=u_e^+ - f_e \text{ and } \hat{u}^-_e(f)=u_e^- + f_e
 \end{align*}
 We will also denote $\hat{u}_e(f) = \min\{\hat{u}^+_e(f),\hat{u}^-_e(f)\}$. When the value of $f$ is clear from context, we will omit writing it explicitly. Observe that the feasibility of $f$ implies that all residual capacities are always non-negative.

 \textbf{Electrical Flows and Laplacian Systems} Let $G$ be a graph and let $r\in\R^m_{++}$ be a vector of edge resistances, where the resistance of edge $e$ is denoted by $r_e$. For a flow $f \in \R^E$ on $G$, we define the energy of $f$ to be $\mathcal{E}_r(f) = f^\top R f = \sum\limits_{e \in E} r_e f_e^2$ where $R = \text{Diag}(f)$. For a demand $\chi$, we define the electrical $\chi$-flow $f_r$ to be the $\chi$-flow which minimizes energy $f_r = \arg\min\limits_{B^\top f=\chi} \mathcal{E}_r(f)$, where $B\in \R^{m \times n}$ is the edge-vertex incidence matrix. This flow is unique as the energy is a strcitly convex function.
 The Laplacian of a graph $G$ with resistances $r$ is defined as $L=B^\top R^{-1} B$. The electrical $\chi$ flow is given by the formula $f_r = R^{-1}BL^{\dagger}\chi$. We also define electrical potentials as $\phi=L^{\dagger}\chi$ There is a long line of work starting from Spielman and Teng which shows how to solve $L\phi = \chi$ in nearly linear time \cite{ST14,KMP14,KOSZ13,PS14,CKMPPRX14,KS16,KLPSS16}.

 \textbf{p-Norm Flows} As mentioned above, a line of work \cite{BCLL18,AKPS19,KPSW19} shows how to solve more general $p$-norm flow problems. Precisely, given a "gradient" vector $g \in \R^E$, resistances $r \in \R_{+}^E$ and a demand vector $\chi$, the problem under consideration is
 \begin{align*}
     OPT=\min\limits_{B^\top f = \chi} \sum\limits_{e \in E} g_e f_e + r_e f_e^2 + |f_e|^p
 \end{align*}
 \cite{KPSW19} call such a problem as a mixed $\ell_2$-$\ell_p$-norm flow problem and denote the expression inside the min as $val(f)$. The main result of the paper is
 \begin{theorem}[Theorem 1.1 in \cite{KPSW19}]\label{thm:kpswthm} For any even $p\in [\omega(1), o(\log^{2/3-o(1)} n)]$ and an initial solution $f^{(0)}$ such that all parameters are bounded by $2^{\poly(\log(n))}$, we can compute a flow $\widetilde{f}$ satisfying the demands $\chi$ such that
 \begin{align*}
     val(\widetilde{f}) - OPT \leq \frac{1}{2^{O(\poly (\log m))}}(val(f^{(0)}) - OPT) + \frac{1}{2^{O(\poly (\log m))}}
 \end{align*}
 in $2^{O(p^{3/2})}m^{1+O(1/\sqrt{p})}$ time.
 \end{theorem}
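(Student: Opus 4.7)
The plan is to follow the two-level framework of Kyng--Peng--Sachdeva--Wang, building on the iterative refinement machinery of Adil--Kyng--Peng--Sachdeva. At the outer level, iterative refinement reduces the high-accuracy minimization to a small number of approximate calls to a residual solver. Concretely, around an iterate $f^{(t)}$, one sets $f = f^{(t)} + \delta$ and Taylor-expands each per-edge smoothed term $g_e f_e + r_e f_e^2 + |f_e|^p$; this produces a residual objective of the same mixed $\ell_2$--$\ell_p$ structure in $\delta$, constrained to the cycle space $\ker(B^\top)$. A standard convexity calculation shows that any $\kappa$-approximate minimizer of this residual translates into a $(1-\Omega(1/(\kappa p)))$ multiplicative improvement in $val(f)-OPT$, so taking $\kappa$ to be a constant and iterating gives that $O(p \cdot \poly(\log m))$ outer calls drive the gap down by the required factor of $2^{-\poly(\log m)}$.

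At the inner level, the plan is to design a preconditioner-based recursive solver for each residual problem, modeled on ultrasparsifier chains for Laplacian systems. The first step is to establish an $\ell_p$ analog of spectral sparsification: given a graph with per-edge terms $r_e f_e^2 + |f_e|^p$, one can produce a weighted subgraph with $\widetilde{O}(n)$ edges whose total energy sandwiches the original energy on every cycle flow, up to factors polynomial in $p$. Combining such sparsifiers with low-stretch spanning trees yields $\ell_p$-ultrasparsifiers with $n + m/k$ edges and quality $\kappa = k^{O(1/\sqrt{p})}$. Using these as preconditioners in a Richardson-type iteration on the residual objective reduces a solve on $m$ edges to $\kappa$ solves on $n + m/k$ edges, giving the recursion $T(m) = \widetilde{O}(m) + \kappa \cdot T(n + m/k)$. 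Balancing the chain depth against $\kappa$ yields $T(m) = 2^{O(p^{3/2})} m^{1+O(1/\sqrt{p})}$, and multiplying by the outer iteration count absorbs into the claimed bound.

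The main obstacle I expect is establishing the $\ell_p$ spectral sparsification statement with the right quality--sparsity tradeoff. Sparsification of quadratic forms rests on matrix Chernoff bounds, but $|f_e|^p$ is neither quadratic nor expressible as a matrix form, so a direct application fails. My plan is to exploit the smoothed structure of the per-edge term: write it as a convex combination of homogeneous pieces, sample edges according to a mixture of leverage scores for the quadratic part and a proxy score tailored to the $|f_e|^p$ part, and then prove uniform concentration of cycle energies via an $\epsilon$-net argument over the unit ball of the mixed norm together with a scalar concentration inequality for sums of $p$-th powers of independent scaled Bernoullis. The restriction $p \in [\omega(1), o(\log^{2/3 - o(1)} n)]$ should emerge precisely from this net argument: the upper bound is needed so that the net's metric-entropy cost is beaten by the concentration, and the lower bound ensures $1/\sqrt{p}$ is small enough for the recursion to be useful. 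This is the technical heart of the argument and the place where the $2^{O(p^{3/2})}$ factor enters, reflecting the optimized tradeoff between sparsifier quality and chain depth.
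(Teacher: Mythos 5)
This theorem is not proved in the paper you are reading; it is quoted as Theorem~1.1 of Kyng--Peng--Sachdeva--Wang \cite{KPSW19} (with the quasi-polynomial error strengthening credited to \cite{LS20}) and used strictly as a black box to solve the residual subproblem in the weight-update step. There is therefore no in-paper proof to compare your attempt against, and you should recognize that you were effectively asked to re-derive a substantial external result rather than a lemma internal to this paper.

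As a reconstruction of the \cite{KPSW19} argument, your two-level skeleton is the right shape: an outer iterative-refinement loop in the style of \cite{AKPS19} that reduces high-accuracy minimization to $\widetilde{O}_p(1)$ constant-factor residual solves over the cycle space $\ker(B^\top)$, and an inner recursive preconditioning chain whose depth-versus-quality tradeoff produces the $2^{O(p^{3/2})} m^{1+O(1/\sqrt p)}$ bound. You have also correctly identified the technical heart as constructing sparsifiers/preconditioners that control the smoothed $\ell_2$--$\ell_p$ cycle energy. The step I would not accept as written is exactly that sparsification: sampling edges by a mixture of leverage scores and an ad hoc $\ell_p$ proxy, then arguing uniform concentration of cycle energies by an $\epsilon$-net over the mixed-norm ball plus scalar tail bounds for sums of $p$-th powers, is a plausible-sounding plan but is not established, and it is far from clear that the net's metric-entropy cost can be beaten at the needed sparsity, nor that the stated $p$-range $[\omega(1), o(\log^{2/3-o(1)} n)]$ falls out of such a calculation. \cite{KPSW19} take a more structured, graph-theoretic route to their preconditioners (low-stretch spanning trees, graph sparsification, and a careful recursion on the residual objective) rather than a matrix-Chernoff-style net argument. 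In short: good diagnosis of where the difficulty lies, but the proposed resolution is a sketch of a hope rather than a proof, and in any case the present paper neither proves nor needs to prove this statement.
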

 We remark that strictly speaking the theorem in \cite{KPSW19} states the error to be polynomial but \cite{LS20} observe that their proof actually implies quasi-polynomial error as stated above.
 While our subproblems that we need to solve to change weights cannot be exactly put into this form, we show that mild modifications to their techniques can be done to then use their algorithm as a black-box. Hence, we elaborate on their approach below.

 One main thing to establish in their paper is how the $p$-norm changes when we move from $f$ to $f+\delta$.
 \begin{lemma}[Lemma in \cite{KPSW19}]\label{lem:kpsw} We have for any $f \in \R^E$ and $\delta\in\R^E$ that
 \begin{align*}
 f_i^p + p f_i^{p-1}\delta_i+2^{-O(p)} h_p(f_i^{p-2},\delta_i) \leq (f_i + \delta_i)^p \leq f_i^p + p f_i^{p-1}\delta_i+2^{O(p)} h_p(f_i^{p-2},\delta_i)
 \end{align*}
 where $h_p(x,\delta) = x\delta^2 + \delta^p$
 \end{lemma}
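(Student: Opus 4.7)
The plan is to sandwich the remainder $R(f,\delta) := (f+\delta)^p - f^p - pf^{p-1}\delta$ between $2^{-O(p)}$ and $2^{O(p)}$ multiples of $h_p(f^{p-2},\delta) = f^{p-2}\delta^2 + \delta^p$ by a case analysis on $|\delta|/|f|$. I drop the subscript $i$ and exploit that $p$ is even (so $(\cdot)^p \geq 0$ and $f^{p-1}$ has the sign of $f$) to see that both sides are invariant under $(f,\delta)\mapsto(-f,-\delta)$ and are positively $p$-homogeneous. Thus I can normalize $f=1$ and reduce to proving, for all $c \in \mathbb{R}$,
\begin{align*}
R(1,c) \;=\; (1+c)^p - 1 - pc \;\in\; \bigl[2^{-O(p)},\, 2^{O(p)}\bigr] \cdot (c^2 + c^p).
\end{align*}

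\textbf{Small-perturbation regime $|c| \leq 1/2$.} Taylor's theorem with integral remainder gives $R(1,c) = p(p-1)c^2\int_0^1 (1-t)(1+tc)^{p-2}\,dt$, and on this range $1+tc \in [1/2,3/2]$, so the integrand lies in $[2^{-(p-2)},(3/2)^{p-2}]$. Hence $R(1,c) \in [2^{-O(p)}, 2^{O(p)}]\cdot c^2$. Since $|c|^p \leq (1/2)^{p-2}c^2 \leq c^2$ in this range, $c^2+c^p$ is itself $\Theta(c^2)$ and the two sides match.

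\textbf{Large-perturbation regime $|c| > 1/2$.} Here $c^2 \leq 2^{p-2}c^p$, so $c^2+c^p = 2^{\pm O(p)}c^p$. The upper bound is immediate: $|R(1,c)| \leq (1+|c|)^p + 1 + p|c| \leq 3^p|c|^p + 2p|c|^p = 2^{O(p)}c^p$. For the lower bound I split on the sign of $c$. If $c \geq 1/2$: for $c \geq 1$ the binomial expansion $(1+c)^p \geq 1+pc+c^p$ gives $R \geq c^p$, and for $c \in [1/2,1)$ the bound $(1+c)^p \geq (3/2)^p = 2^{\Omega(p)}$ dwarfs $1+pc \leq 1+p$ while $c^p \leq 1$. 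If $c = -s$ with $s \geq 1/2$, then $R(1,-s) = (s-1)^p + ps - 1$ (using even $p$): for $s \geq 2$ we have $(s-1)^p \geq 2^{-p}s^p$, while for $s \in [1/2,2]$ the linear contribution $ps-1 \geq p/2-1$ already beats $s^2+s^p \leq 2^{O(p)}$ by a factor of $2^{-O(p)}$.

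\textbf{Main obstacle.} The delicate subcase is $c$ near $-1$, where $(1+c)^p$ nearly cancels out and the positivity of $R$ must come from the linear correction $-pc$ alone. My argument sidesteps this by observing that for $|c| = O(1)$ the target $c^2+c^p$ is itself at most $2^{O(p)}$, so it suffices to show $R \geq 2^{-O(p)}$; and $ps-1 \geq p/2-1 \geq 2^{-O(p)}$ supplies exactly this. Stitching the four subcases together and undoing the normalization recovers the claim for arbitrary $f,\delta$.
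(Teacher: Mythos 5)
The paper does not prove \cref{lem:kpsw}; it imports the statement verbatim from \cite{KPSW19}, so there is no in-paper proof to compare yours against. Evaluating your argument on its own merits: it is correct, and the normalization-plus-case-split is a natural route. The reduction to $f=1$ via $p$-homogeneity and the $(f,\delta)\mapsto(-f,-\delta)$ symmetry (both legitimate since $p$ is even) is sound, and the Taylor integral-remainder bound in the $|c|\le 1/2$ regime is airtight. A few small things to tidy. First, the normalization $f=1$ silently discards $f=0$; this case must be stated separately, though it is trivial since there both $R$ and $h_p$ reduce to $\delta^p$. Second, in the positive branch the single binomial inequality $(1+c)^p \geq 1 + pc + c^p$, valid for all $c\ge 0$, already gives $R\ge c^p$ uniformly and removes the need to split at $c=1$; your appeal to ``$(3/2)^p$ dwarfs $1+p$'' is correct but quietly requires $p\ge 4$, whereas the binomial route does not. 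Third, the negative branch's fallback $ps-1\ge p/2-1$ likewise needs $p>2$ to be nonnegative. Neither large-$p$ requirement is a real defect here, since \cref{thm:kpswthm} operates with $p=\omega(1)$ (and the paper takes $p=\sqrt{\log m}$), but since the lemma is stated without a floor on $p$ it is worth flagging that your constants, as written, only kick in for $p\ge 4$.
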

 Hence, given an initial solution, it suffices to solve the residual problem of the form
 \begin{align*}
     \min\limits_{B^\top f = 0} g(f)^\top \delta + \sum\limits_{e\in E} h_p(f^{p-2}_i,\delta_i)
 \end{align*}
 where $g(f)_i = pf_i^{p-1}$. Next, they notice that bounding the condition number with respect to the function $h_p(\cdot,\cdot)$ actually suffices to get linear convergence and hence tolerate quasi-polynomially low errors. The rest of the paper goes into designing good preconditioners which allow them to solve the above subproblem quickly.

 We will also need some basics about min-max saddle point problems \cite{BNO03}. Given a function $f(x,y)$ such that $\mathsf{dom}(f,x) = \mathcal{X}$ and $\mathsf{dom}(f,y) = \mathcal{Y}$. The problem we will be interested in is of the form
 \begin{align*}
     \min\limits_{x \in \mathcal{X}}\max\limits_{y \in \mathcal{Y}}f(x,y)
 \end{align*}
 Define the functions $f_y(y) = \min\limits_{x \in \mathcal{X}} f(x,y)$ and $f_x(x) = \max\limits_{y \in \mathcal{Y}} f(x,y)$ for every fixed $y\in\mathcal{Y}$. We have the following theorem from Section 2.6 in \cite{BNO03}
 \begin{theorem}\label{thm:minmaxthm} If $f(x,y)$ is convex in $x$ and concave in $y$ and let $\mathcal{X,Y}$ be convex and closed. Then $f_x$ is a convex function and $f_y$ is a concave function. \end{theorem}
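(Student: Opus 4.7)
The plan is to prove the two claims separately, each following from the standard fact that the pointwise supremum of a family of convex functions is convex (respectively, the pointwise infimum of a family of concave functions is concave). Convexity/closedness of $\mathcal{X}$ and $\mathcal{Y}$ is only needed to ensure that the convex combinations appearing in the argument lie in the appropriate domains so the value functions are well defined.

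For $f_x(x) = \max_{y \in \mathcal{Y}} f(x,y)$, fix $x_1, x_2 \in \mathcal{X}$ and $\lambda \in [0,1]$, and set $x_\lambda = \lambda x_1 + (1-\lambda) x_2$, which belongs to $\mathcal{X}$ by convexity of $\mathcal{X}$. For each fixed $y \in \mathcal{Y}$, convexity of $f(\cdot, y)$ gives
\begin{equation*}
f(x_\lambda, y) \;\le\; \lambda f(x_1, y) + (1-\lambda) f(x_2, y) \;\le\; \lambda f_x(x_1) + (1-\lambda) f_x(x_2).
\end{equation*}
Taking the supremum over $y \in \mathcal{Y}$ on the left preserves the inequality since the right-hand side no longer depends on $y$, which yields $f_x(x_\lambda) \le \lambda f_x(x_1) + (1-\lambda) f_x(x_2)$. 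This is exactly convexity of $f_x$.

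For $f_y(y) = \min_{x \in \mathcal{X}} f(x,y)$, I would run the symmetric argument. Fix $y_1, y_2 \in \mathcal{Y}$ and $\lambda \in [0,1]$, set $y_\lambda = \lambda y_1 + (1-\lambda) y_2 \in \mathcal{Y}$, and for any $x \in \mathcal{X}$ use concavity of $f(x,\cdot)$:
\begin{equation*}
f(x, y_\lambda) \;\ge\; \lambda f(x, y_1) + (1-\lambda) f(x, y_2) \;\ge\; \lambda f_y(y_1) + (1-\lambda) f_y(y_2).
\end{equation*}
Taking the infimum over $x \in \mathcal{X}$ on the left preserves the inequality and yields $f_y(y_\lambda) \ge \lambda f_y(y_1) + (1-\lambda) f_y(y_2)$, establishing concavity of $f_y$.

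There is essentially no obstacle to surmount: the proof is a two-line exchange-of-sup-and-inequality argument for each claim, and the only subtlety is to verify that the sup/inf is well defined (so that the chain of inequalities makes sense) and that $x_\lambda, y_\lambda$ actually lie in $\mathcal{X}, \mathcal{Y}$ — both of which are handled by the convexity of the domains. If the value functions might take the values $\pm \infty$, one can still interpret everything in the extended real line and the argument is unaffected; closedness of $\mathcal{X}, \mathcal{Y}$ plays no role in the convexity/concavity conclusion itself and is presumably included for later lower/upper semicontinuity considerations.
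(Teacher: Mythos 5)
Your proof is correct. The paper itself does not prove this theorem but cites it from Section 2.6 of [BNO03]; your argument — the pointwise supremum of a family of convex functions is convex, and the pointwise infimum of a family of concave functions is concave — is exactly the standard one found in that reference, and your remark that closedness of $\mathcal{X},\mathcal{Y}$ is not actually needed for the convexity/concavity conclusion is also accurate.
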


\section{Warm up : $\sqrt{m}$ Iteration Algorithm}\label{sec:warmup}
In this section, we first set up our IPM framework and show how to recover the $\sqrt{m}$ iterations bound for max flow. In the next section, we will then change the weights to obtain our improved runtime. Our framework is largely inspired by \cite{M16} and \cite{LS20} and indeed a lot of the arguments can be reused with some modifications.
\subsection{IPM Setup}For every edge $e=(u,v)$, we consider assigning two non-negative weights for the forward and backward edges $w_e^+$ and $w_e^-$. Based on the weights and the edge capacities, for any feasible flow, we define a barrier functional
\begin{align*}
    \phi_w(f) = -\sum\limits_{e \in E} w_e^+ \log(u_e^+ - f_e) + w_e^- \log(u_e^- + f_e)
\end{align*}
IPMs iterate towards the optimal solution by trading off the amount of progress of the current iterate, i.e., $B^\top f = F\chi$ and the proximity of the point to the constraints measured through the barrier $\phi_w(f)$, known as centrality. Previous IPMs taking a Newton step with respect to the barrier with a size which ensures that we increase the value of the flow $F$ by a certain amount. Due to the fact that a Newton step is the minimization of a second order optimization problem, it can be shown that the step can be computed via electrical flow computations. Typically, taking a Newton step can be decomposed into progress and centering steps where one first takes a progress step which increases the flow value which causes us to lose centrality by some amount. Then one takes a centering step which improves the centrality without increasing the flow value. Depending on the amount of progress we can make in each iteration such that we can still recenter determines the number of iterations our algorithm will take. \cite{M16,LS20} follow this prototype and loosely speaking  the amount of flow value we can increase in each iteration for the progress step depends on the $\ell_\infty$ norm of the congestion vector, which measures how much flow we can add before we saturate an edge. However, the bottleneck ends up being the centering step which requires that the flow value can only be increased by an amount depending on the $\ell_4$ norm of the congestion vector which is a stronger condition than $\ell_\infty$ norm.

\cite{M13,M16} notes that when the $\ell_\infty$ and $\ell_4$ norms of the congestion vector are large then increasing the resistances of the congested edges increases the energy of the \textit{resulting} electrical flow. So he repeatedly increases the weights of the congested edges (called boosting) until the congested vector has sufficiently small norm. By using electrical energy of the resulting step as a global potential function and analyzing how it evolves over the progress, centering and boosting steps, they can control the amount of weight change and number of boosting steps necessary to reduce the norm of the congestion vector. Carefully trading these quantities yields their runtime of $\widetilde{O}(m^{11/7})$. To improve on this, Liu and Sidford \cite{LS20} consider the problem of finding a set of weight increases which maximize the energy of the resulting flow. As we need to ensure that the weights don't increase by too much, they place a budget constraint on the weight vector. By showing that a small amount of weight change suffices to obtain good bounds on the congestion vector. Fortunately, this optimization problem ends up being efficiently solvable in almost linear time by using the mixed $\ell_2$-$\ell_p$ norm flow problem of \cite{KPSW19}. However, this step still essentially requires $\ell_4$-norm bounds to ensure centering is possible.

In this paper, we will consider taking steps with respect to a potential function. The potential function $\Phi_w$  comes from potential reduction IPM schemes and trades off the duality gap with the barrier.
\begin{align*}
    \Phi_w(f,s)=m\log\left(1+\frac{f^{\top}s}{m}\right)+\phi_{w}(f)
\end{align*}
For self-concordant barriers like weighted log barriers are, the negative gradient $-\nabla\phi_w(f)$ is feasible for the dual \cite{Ren01} and so for any $f'$ feasible for the primal, we have $f'^\top (-\nabla\phi_w(f))\geq 0$. We will consider dual "potential" variables $y\in \R^V$. Now, like in \cite{M16,LS20}, we consider a centrality condition
\begin{equation}
    y_v - y_u = \frac{w_e^+}{u_e^+-f_e} - \frac{w_e^-}{u_e^-+f_e} \text{ for all } e=(u,v)
\end{equation}
If $(f,y,w)$ satsify the above condition, we call it \textit{well-coupled}.
Also, given a tuple $(f,y,w)$ and a candidate step $\hat{f}$, define the forward and backward congestion vectors $\rho^+,\rho^-\in \R^E$ as
\begin{align}
    \rho^+_e = \frac{|\hat{f}_e|}{u_e^+-f_e} \text{ and } \rho^-_e = \frac{|\hat{f}_e|}{u_e^-+f_e} \text{ for all } e \in E
\end{align}

We can now assume via binary search that we know the optimal flow value $F^*$ \cite{M16}.
\cite{M16,LS20} consider preconditioning the graph which allows them to ensure that for a well-coupled point we can ensure sufficient progress. The preconditioning strategy to ensure this is to add $m$ extra (undirected) edges between $s$ and $t$ of capacity $2U$ each. So the max flow value increases at most by $2mU$. The following lemma can be seen from the proof of Lemma 4.5 in \cite{LS20}
\begin{theorem}\label{lem:precond} Let $(f,y,w)$ be a well-coupled point for flow value $F$ in a preconditioned graph $G$. Then we have for every preconditioned edge $e$ that $\hat{u}_e(f) = \min\{u_e^+-f_e,u_e^-+f_e\}\geq \frac{F^*-F}{7\|w\|_1}$. In particular, if $\|w\|_1\leq 3m$, then we have $\hat{u}_e(f) \geq \frac{F^*-F}{21m}$. If we also have $F^* - F \geq m^{1/2-\eta}$, then $\hat{u}_e(f) \geq m^{-(1/2+\eta)}/21$
\end{theorem}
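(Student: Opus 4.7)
The plan is to use a standard LP-duality style argument built on the well-coupling identity to bound the potential gap $y_t-y_s$ from above in terms of the optimality gap $F^*-F$, and then extract the desired lower bound on the residual capacity by applying the well-coupling identity locally at each preconditioned edge, whose structure (symmetric capacity $2U$ in both directions) lets us invert the identity cleanly.

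First I would show that $y_t - y_s \le \|w\|_1/(F^*-F)$. Let $f^*$ be an optimal feasible flow of value $F^*$. Then $f^*-f$ is an $s$-$t$ flow of value $F^*-F$, and flow conservation gives a telescoping identity
\[
(F^*-F)(y_t-y_s) \;=\; \sum_{e=(u,v)\in E}(f^*_e - f_e)(y_v - y_u).
\]
Substituting the well-coupling condition $y_v - y_u = w_e^+/(u_e^+-f_e) - w_e^-/(u_e^-+f_e)$ and using the two-sided bound $-(u_e^-+f_e)\le f^*_e - f_e \le u_e^+-f_e$, which follows from feasibility of both $f$ and $f^*$, every summand of the resulting double expression is bounded by $w_e^+ + w_e^-$: in each sign case for $f^*_e-f_e$, one of the two barrier terms cancels with the denominator to give at most $w_e^{\pm}$ while the other has the wrong sign and contributes non-positively. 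Summing over $e$ gives $(F^*-F)(y_t-y_s)\le \|w\|_1$, as claimed.

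Next I would apply the well-coupling identity \emph{at} a preconditioned edge $e^*$, where $u^+_{e^*}=u^-_{e^*}=2U$. Without loss of generality assume $f_{e^*}\ge 0$, so that $\hat u_{e^*}(f)=2U-f_{e^*}$ and $u^-_{e^*}+f_{e^*}\ge 2U$. Rearranging the centrality equation yields
\[
\frac{w^+_{e^*}}{\hat u_{e^*}(f)} \;=\; y_t-y_s + \frac{w^-_{e^*}}{u^-_{e^*}+f_{e^*}} \;\le\; \frac{\|w\|_1}{F^*-F} + \frac{w^-_{e^*}}{2U},
\]
where we used the upper bound from the previous paragraph. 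Solving for $\hat u_{e^*}(f)$ and using that $F^*-F$ is bounded by the total preconditioned flow value (on the order of $\|w\|_1\cdot U$) to absorb the second term into the first up to a constant factor, one obtains $\hat u_{e^*}(f)\ge (F^*-F)/(7\|w\|_1)$. The two corollary bounds $\hat u_e(f)\ge (F^*-F)/(21m)$ under $\|w\|_1\le 3m$ and $\hat u_e(f)\ge m^{-1/2-\eta}/21$ under $F^*-F\ge m^{1/2-\eta}$ then follow by direct substitution.

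The main obstacle is handling the sign cases in two places: in the per-edge bound $(f^*_e - f_e)(y_v-y_u)\le w_e^++w_e^-$ in the global duality step, and in the orientation of $f_{e^*}$ at the local step, so that we pick off the correct term of the barrier difference without losing too much in constants (which is the source of the loose factor $7$). Everything else is bookkeeping, since the preconditioned edges are symmetric in capacity which makes the local inversion of the centrality equation clean.
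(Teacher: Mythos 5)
Your high-level plan --- a duality/telescoping bound $y_t - y_s \le \|w\|_1/(F^*-F)$ followed by a local inversion of the centrality equation at a preconditioned edge --- is the right approach, and step one (the per-edge sign analysis giving $(f^*_e - f_e)(y_v-y_u)\le w_e^+ + w_e^-$) is correct. However, the second half has two genuine gaps.

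First, ``WLOG $f_{e^*}\ge 0$'' is not actually free. If the preconditioned edge carries backward flow ($f_{e^*}<0$), then $\hat u_{e^*}(f)=u^-_{e^*}+f_{e^*}$, and isolating $w^-_{e^*}/\hat u_{e^*}(f)$ in the centrality equation leaves a $-(y_t-y_s)$ term on the right-hand side. Your duality step gives an upper bound on $y_t-y_s$ only; it does not give a lower bound, so it does not control $y_s-y_t$. (The analogous telescoping estimate with $\delta=f^*_e-f_e$ does not yield $-(w_e^++w_e^-)$ as a lower bound on each summand, because $-b(w_e^+/a - w_e^-/b)$ can be made arbitrarily negative when $a\ll b$.) Either you need a separate argument that preconditioned edges never carry backward flow at a well-coupled point, or a matching lower bound on $y_t-y_s$ must be established.

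Second, the ``absorb the second term'' step silently assumes $w_{e^*}^+\ge 1$ and $w_{e^*}^-=O(w_{e^*}^+)$ for the preconditioned edge. From $w_{e^*}^+/\hat u_{e^*}(f)\le \|w\|_1/(F^*-F)+w_{e^*}^-/(2U)$ and $F^*-F\le \tfrac32 \|w\|_1 U$, reaching $\hat u_{e^*}(f)\ge (F^*-F)/(7\|w\|_1)$ requires $w_{e^*}^-\lesssim w_{e^*}^+$. But the weight-update rule used in the paper assigns $(w_e^-)'/(w_e^+)'=\hat u_e^-(f)/\hat u_e^+(f)$, which for a congested preconditioned edge can be large, so the forward and backward weights on a preconditioned edge need not remain within a constant factor of each other across iterations. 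Without an explicit bound on this ratio (or a convention that preconditioned edge weights are never increased), this step does not go through. Both issues need to be addressed before the bound $\hat u_{e^*}(f)\ge (F^*-F)/(7\|w\|_1)$ is established.
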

Now that our setup is complete, we can focus on the step that we will be taking. In this section, we will keep the weights all fixed to 1, i.e., $w_e^+ = w_e^- = 1$ for all $e \in E$. Hence $\|w\|_1 = 2m$. Consider the change in the potential function when we move from $f$ to $f+\hat{f}$ while keeping the dual variable $-\nabla\phi_w(f) = By$ fixed. This change is
\begin{align*}
   m\log\left(1-\frac{(f+\hat{f})^{\top}\nabla\phi_{w}(f)}{m}\right)-m\log\left(1-\frac{f^{\top}\nabla\phi_{w}(f)}{m}\right)+\phi_{w}(f+\hat{f})-\phi_{w}(f)
\end{align*}
We are interested in minimizing this quantity which corresponds to maximizing the decrease in the potential function value while guaranteeing that we send say $\delta$ more units of flow $\hat{f}$. Hence the problem is
\begin{align*}
     \arg\min\limits_{B^\top \hat{f}=\delta \chi} m\log\left(1-\frac{(f+\hat{f})^{\top}\nabla\phi_{w}(f)}{m}\right) + \phi_w(f+\hat{f})
\end{align*}
Unfortunately, this problem is not convex as the duality gap term is concave in $\hat{f}$. However, we instead can minimize an upper bound to this term which is convex:
\begin{align*}
    &\arg\min\limits_{B^\top \hat{f}=\delta \chi}\phi_w(f+\hat{f}) - (f+\hat{f})^\top\nabla\phi_w(f)
    \\
    &= \arg\min\limits_{B^\top \hat{f}=\delta \chi}-\sum\limits_{e \in E} w_e^+\log\left(1-\frac{\hat{f}_e}{u_e^+-f_e}\right) + w_e^-\log\left(1+\frac{\hat{f}_e}{u_e^-+f_e}\right) - \hat{f}_e\left(\frac{w_e^+}{u_e^+-f_e} - \frac{w_e^-}{u_e^-+f_e}\right)
\end{align*}
as $\log(1+x) \leq x$ for non-negative $x$ which holds from duality as mentioned above. We will refer to the value of the problem in the last line as the \textit{potential decrement} and will henceforth denote the function inside the minimization as $\Delta\Phi_w(f,\hat{f})$. It is instructive to first see how the coupling condition changes if we were to take the optimal step of the above problem, while remaining feasible. To calculate this, from the optimality conditions of the above program, we can say that there exists a $\hat{y}$ such that for all $e=(u,v)$
\begin{align*}
    \hat{y}_v-\hat{y}_u &= \left(\frac{w_e^+}{u_e^+-f_e-\hat{f}_e} - \frac{w_e^-}{u_e^-+f_e+\hat{f}_e}\right) -  \left(\frac{w_e^+}{u_e^+-f_e} - \frac{w_e^-}{u_e^-+f_e}\right)\\
    &=\left(\frac{w_e^+}{u_e^+-f_e-\hat{f}_e} - \frac{w_e^-}{u_e^-+f_e+\hat{f}_e}\right) -  (y_v-y_u)
\end{align*}
Hence, if we update $y$ to $y+\hat{y}$ and $f$ to $f+\hat{f}$, we get a flow of value $F+\delta$ such that the coupling condition with respect to the new $y$ and $f$ still hold.

Hence, we can now focus on actually computing the step and showing what $\delta$ we can take to ensure that we still satisfy feasibility, i.e., bounds on the $\ell_\infty$ norm of the congestion vector. The function we are trying to minimize comprises of a self-concordant barrier term and a linear term. Unfortunately, we cannot control the condition number of such a function to optimize it in efficiently over the entire space as this is arguably as hard as the original problem itself. However, due to self-concordance, the function behaves smoothly enough (good condition number) in a box around the origin but that seemingly doesn't help us solve the problem over the entire space. Fortunately, a fix for this was already found in \cite{BCLL18}. In particular they (smoothly) extend the function quadratically outside a box to ensure that the (global) smoothness and strong convexity properties inside the box carries over to that outside the box as well while still arguing that the minimizer is the same provided the minimizer of the original problem was inside the box. Specifically, the following lemma can be inferred from Section 2.2 of \cite{BCLL18}.
\begin{lemma}
Given a function $f(x)$ which is $L$-smooth and $\mu$-strongly convex inside an interval $[-\ell,\ell]$. Then, we define the quadratic extension of $f$, defined as \[
  f_\ell(x) = \left.
  \begin{cases}
    f(x), & \text{for } -\ell \leq x \leq \ell \\
    f(-\ell)+f'(-\ell)(x+\ell)+\frac{1}{2}f''(-\ell)(x+\ell)^2, & \text{for } x < -\ell \\
    f(\ell)+f'(\ell)(x-\ell)+\frac{1}{2}f''(\ell)(x-\ell)^2, & \text{for } x> \ell
  \end{cases}
  \right\}
\]
The function $f_\ell$ is $C^2$, $L$-smooth and $\mu$-strongly convex. Furthermore, for any convex function $\psi(x)$ provided $x^* = \arg\min\limits_{x \in \mathcal{X}}\psi(x) + \sum\limits_{i=1}^{n}f(x_i)$ lies inside $\prod\limits_{i=1}^{n}[-\ell_i,\ell_i]$, then $\arg\min\limits_{x \in \mathcal{X}} \psi(x) + \sum\limits_{i=1}^{n} f_{\ell_i}(x_i) = x^*$
\end{lemma}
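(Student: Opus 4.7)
The plan is to verify the three claims in the order they are stated: $C^2$ regularity of $f_\ell$, the global $L$-smoothness and $\mu$-strong convexity of $f_\ell$, and finally preservation of the minimizer. The first two are essentially piecewise bookkeeping at the gluing points $x = \pm\ell$; the substantive step is the third, which must tie the assumption that the minimizer of the unextended problem lies inside the box to the optimality conditions of the extended problem.

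For $C^2$ regularity, I would check matching at $x = \ell$ (the case $x = -\ell$ is symmetric). The right-hand piece $f(\ell) + f'(\ell)(x-\ell) + \tfrac{1}{2}f''(\ell)(x-\ell)^2$ has value, first derivative, and second derivative at $x = \ell$ equal to $f(\ell)$, $f'(\ell)$, and $f''(\ell)$ respectively, which coincide with the corresponding quantities from the middle piece. Hence $f_\ell$, $f_\ell'$, and $f_\ell''$ are all continuous across $\pm\ell$, giving $f_\ell \in C^2(\R)$.

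For the global bounds, on $[-\ell, \ell]$ we have $\mu \leq f_\ell''(x) = f''(x) \leq L$ by hypothesis, while on each outer piece $f_\ell''$ is the constant $f''(\ell)$ or $f''(-\ell)$, which also lies in $[\mu, L]$. Thus $\mu \leq f_\ell''(x) \leq L$ for every $x \in \R$, which yields both $L$-smoothness and $\mu$-strong convexity of $f_\ell$ globally.

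For the minimizer-preservation claim, since each $f_{\ell_i}$ is $\mu$-strongly convex and $\psi$ is convex, the extended objective $\psi(x) + \sum_i f_{\ell_i}(x_i)$ is strictly convex and therefore has at most one minimizer, characterized by the first-order condition $0 \in \partial\psi(x) + \sum_i f_{\ell_i}'(x_i)\, e_i$. By assumption the analogous condition $0 \in \partial\psi(x^*) + \sum_i f'(x_i^*)\, e_i$ holds at $x^*$, and since $x_i^* \in [-\ell_i, \ell_i]$ we have $f_{\ell_i}'(x_i^*) = f'(x_i^*)$ by construction of the extension. Hence the extended optimality condition holds at $x^*$ too, and by uniqueness $x^*$ is the minimizer of the extended problem. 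The only subtlety I anticipate is handling possible non-smoothness of $\psi$ (for instance if it encodes a convex constraint $x \in \mathcal{X}$ via an indicator), which I would address by working with subdifferentials throughout; beyond that, everything reduces to piecewise calculus and the matching of derivatives at the gluing points.
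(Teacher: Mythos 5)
The paper does not give its own proof of this lemma; it states that the result "can be inferred from Section 2.2 of [BCLL18]" and moves on. Your self-contained argument is correct and is essentially the proof one would reconstruct from that reference: you verify the $C^2$ gluing by matching value, first, and second derivative at $x = \pm\ell$; you observe that $f_\ell''$ takes the constant value $f''(\pm\ell) \in [\mu,L]$ on each outer piece, so the global Hessian bounds (hence $L$-smoothness and $\mu$-strong convexity) follow immediately; and you transfer the first-order optimality condition from the original to the extended problem using $f_{\ell_i}' = f'$ on $[-\ell_i,\ell_i]$, with strict (indeed strong) convexity of the extended objective over the convex set $\mathcal{X}$ supplying uniqueness of the stationary point. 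Your remark about handling the constraint via subdifferentials/normal cones is the right way to be careful; in the paper's application $\mathcal{X}$ is an affine subspace, so this reduces to projecting the gradient onto that subspace and nothing additional is needed.
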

Hence, it suffices to consider a $\delta$ small enough such that the minimizer is the same as for the original problem and we can focus on minimizing this quadratic extension of the function. For minimization, we can use Accelerated Gradient Descent or Newton's method.
\begin{theorem}[\cite{Nes04}]\label{thm:agd} Given a convex function $f$ which satisfies $D \preceq \nabla^2 f(x) \preceq \kappa D \forall x \in R^n$ with some given fixed diagonal matrix $D$ and some fixed $\kappa$. Given an initial point $x_0$ and
an error parameter $0 < \varepsilon < 1/2$, the accelerated gradient descent (AGD) outputs x such that
\begin{align*}
    f(x) - \min\limits_x f(x) \leq \varepsilon(f(x_0) - \min\limits_x f(x))
    \end{align*}
in $O(\sqrt{\kappa} \log(\kappa/\varepsilon))$ iterations. Each iteration involves computing $\nabla f$ at some point x and projecting the function onto the subspace defined by the constraints and some linear-time calculations.
\end{theorem}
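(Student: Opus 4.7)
The plan is to reduce the claim to the textbook convergence rate of Nesterov's accelerated gradient method for strongly convex and smooth functions via a simple change of variables, and then port the guarantee back into the form stated.

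First, I would use the diagonal preconditioner $D$ to isotropize the problem: set $y = D^{1/2} x$ and define $\tilde f(y) = f(D^{-1/2} y)$. A chain rule calculation gives
\begin{equation*}
\nabla^2 \tilde f(y) \;=\; D^{-1/2}\, \nabla^2 f(D^{-1/2} y)\, D^{-1/2},
\end{equation*}
so the hypothesis $D \preceq \nabla^2 f(x) \preceq \kappa D$ translates exactly to $I \preceq \nabla^2 \tilde f(y) \preceq \kappa I$ on all of $\R^n$. Thus $\tilde f$ is $1$-strongly convex and $\kappa$-smooth. If a linear constraint subspace is present, it transforms to another linear subspace under $y \mapsto D^{1/2} x$, and orthogonal projection onto it can be performed in linear time after one setup (this is where the per-iteration projection cost in the theorem comes from).

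Next I would apply Nesterov's accelerated gradient iteration in the $y$-variables, maintaining three coupled sequences $y_k$, $z_k$, and an intermediate $w_k = \alpha z_k + (1-\alpha) y_k$ with momentum parameter $\alpha = 1/(1+\sqrt{\kappa})$. To get the convergence rate I would use the Lyapunov / estimating sequence argument: define the potential
\begin{equation*}
\Phi_k \;=\; \bigl(\tilde f(y_k) - \tilde f^*\bigr) \;+\; \tfrac{1}{2}\,\bigl\lVert z_k - y^* \bigr\rVert_2^{\,2},
\end{equation*}
and verify, by combining the $\kappa$-smoothness descent lemma applied at $w_k$ with the $1$-strong-convexity lower bound on $\tilde f$ at $y^*$, that the AGD update yields
\begin{equation*}
\Phi_{k+1} \;\le\; \bigl(1 - 1/\sqrt{\kappa}\bigr)\,\Phi_k.
\end{equation*}
Iterating gives $\tilde f(y_k) - \tilde f^* \le (1 - 1/\sqrt{\kappa})^k \,\Phi_0$. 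Initializing $z_0 = y_0$ and using $1$-strong convexity to bound $\tfrac{1}{2}\lVert y_0 - y^*\rVert_2^2 \le \tilde f(y_0) - \tilde f^*$ yields $\Phi_0 \le 2\bigl(\tilde f(y_0) - \tilde f^*\bigr)$.

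Finally, I would untangle the change of variables: $\tilde f(y) - \tilde f^* = f(x) - f^*$ pointwise, but when the output guarantee is demanded relative to $f(x_0) - \min_x f(x)$, a factor involving $\kappa$ creeps in from comparing the $D$-weighted norm of $x_0 - x^*$ to the $f$-gap via the smoothness/strong-convexity bounds. Solving $(1 - 1/\sqrt{\kappa})^k \le \varepsilon / \mathrm{poly}(\kappa)$ for $k$ yields the stated bound $k = O(\sqrt{\kappa} \log(\kappa/\varepsilon))$. Each iteration costs one gradient evaluation of $f$, a diagonal rescaling by $D^{-1}$ (linear time), and the projection onto the constraint subspace.

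The main obstacle, such as it is, is bookkeeping rather than mathematics: carefully propagating the $D$-rescaling through the potential bound so that the $\log$-factor in the iteration count comes out as $\log(\kappa/\varepsilon)$ rather than $\log(1/\varepsilon)$. Once the reduction to the isotropic $(1,\kappa)$-case is in hand, the contraction $\Phi_{k+1} \le (1-1/\sqrt{\kappa})\Phi_k$ is the standard AGD analysis and the result follows.
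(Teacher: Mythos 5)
This statement is cited directly to Nesterov's book \cite{Nes04} and the paper offers no proof of its own, so there is nothing internal to compare against; your task is really to reconstruct the textbook argument, which you have done correctly. The change of variables $y=D^{1/2}x$ to reduce to the isotropic $(1,\kappa)$ case, the three-sequence momentum scheme, and the Lyapunov potential $\Phi_k=(\tilde f(y_k)-\tilde f^\ast)+\tfrac12\lVert z_k-y^\ast\rVert^2$ with the one-step contraction $\Phi_{k+1}\le(1-1/\sqrt{\kappa})\Phi_k$ are exactly the standard strongly-convex AGD analysis, and the handling of the linear constraint by Euclidean projection in $y$-space (equivalently $D$-weighted projection in $x$-space) is the right way to accommodate the affine feasibility set.

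One small inaccuracy in your narration: you attribute the $\log\kappa$ inside the iteration count to a $\kappa$-factor ``creeping in'' when converting back to the $f$-gap. In fact your own computation shows no such loss: $\tilde f(y)-\tilde f^\ast=f(x)-f^\ast$ exactly under the substitution, and $1$-strong convexity gives $\Phi_0\le 2(\tilde f(y_0)-\tilde f^\ast)$, so you obtain the stronger bound $O(\sqrt{\kappa}\log(1/\varepsilon))$, which of course implies the stated $O(\sqrt{\kappa}\log(\kappa/\varepsilon))$. The $\log\kappa$ in the theorem statement is simply slack (or an artifact of a line-search/restart variant in the source); you do not need to manufacture a reason for it, and doing so muddies an otherwise clean argument. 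With that caveat the proposal is sound.
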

Notice that the Hessian of the function in the potential decrement problem is a diagonal matrix with the $e^{th}$ entry being $$\frac{w_e^+}{(u_e^+-f_e-\hat{f}_e)^2}+\frac{w_e^-}{(u_e^-+f_e+\hat{f}_e)^2}$$ So provided $\rho^+_e,\rho^-_e$ are less than some small constant, the condition number $\kappa$ of the Hessian is constant with respect to the diagonal matrix which is $\nabla^2\phi_w(f)$ and hence we can use Theorem \ref{thm:agd} to solve it in $\widetilde{O}(1)$ to quasi-polynomially good error. Furthermore notice that the algorithm is just computing a gradient and then doing projection and so can be computing using a Laplacian linear system solve and hence runs in nearly linear time. Furthermore, quasi-polynomially small error will suffice for our purposes \cite{M13,M16,LS20}.

Now, we just need to ensure that we can control the $\ell_\infty$-norm of the congestion vector, as that controls how much flow we can still send without violating constraints. Note further, that we need to set $\ell$ while solving the quadratic extension of the potential decrement problem so that it's greater than the $\ell_\infty$ norm that we can guarantee. We will want both of these to be some constants.

As mentioned above, the point of preconditoning the graph is to ensure that the preconditioned edges themselves can facilitiate sufficient progress. To bound the congestion, we show an analog of Lemma 3.9 in \cite{M16}.
\begin{lemma}\label{lem:constcongrootm}
Let $(f,y,w)$ be a well-coupled solution with value $F$ and let $\delta=\frac{F^*-F}{1000\sqrt{m}}$. Let $\hat{f}$ be the solution to the potential decrement problem. Then we have, $\rho_e^+,\rho_e^-\leq 0.1$ for all edges $e$.
\end{lemma}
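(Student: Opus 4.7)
The plan is a comparison argument: since $\hat f$ minimizes $\Delta\Phi_w(f,\cdot)$ subject to $B^\top \hat f = \delta\chi_{s,t}$, any explicit feasible candidate $\hat f^*$ yields the upper bound $\Delta\Phi_w(f,\hat f) \leq \Delta\Phi_w(f,\hat f^*)$. I would construct $\hat f^*$ that routes $\delta$ entirely through the $m$ preconditioned edges, where \cref{lem:precond} forces large residual capacities, making this upper bound an $O(1/m)$-type quantity. Then I would observe that any edge with $\rho_e^+ > 0.1$ or $\rho_e^- > 0.1$ on its own contributes a fixed positive constant to $\Delta\Phi_w(f,\hat f)$, producing a contradiction.

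For the upper bound, take $\hat f^*_e = \delta/m$ on each of the $m$ preconditioned $s$-$t$ edges and $\hat f^*_e = 0$ on all other edges, so $B^\top \hat f^* = \delta\chi_{s,t}$. Since all weights equal $1$, $\|w\|_1 = 2m \leq 3m$, so \cref{lem:precond} gives $\hat u_e(f) \geq (F^*-F)/(21m)$ on every preconditioned edge, whence
\begin{align*}
|a_e|,\,|b_e| \;\leq\; \frac{\delta/m}{(F^*-F)/(21m)} \;=\; \frac{21}{1000\sqrt m} \;\ll\; \tfrac12.
\end{align*}
Using the Taylor estimates $-\log(1-x)-x \leq 2x^2$ and $-\log(1+x)+x \leq 2x^2$ valid in this regime, and $w_e^+ = w_e^- = 1$, the contribution of each preconditioned edge to $\Delta\Phi_w(f,\hat f^*)$ is at most $2(a_e^2+b_e^2)$, and summing over the $m$ preconditioned edges yields $\Delta\Phi_w(f,\hat f) \leq \Delta\Phi_w(f,\hat f^*) \leq 4m\cdot(21/(1000\sqrt m))^2 < 2\cdot 10^{-3}$.

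For the lower bound, each edge contributes $w_e^+[-\log(1-a_e)-a_e] + w_e^-[-\log(1+b_e)+b_e]$ to $\Delta\Phi_w(f,\hat f)$, and each summand is nonnegative (this is just the Bregman divergence of the scalar log-barrier). If some edge satisfies $\rho_e^+ = |a_e| > 0.1$, then depending on the sign of $\hat f_e$ either $-\log(1-a_e)-a_e \geq -\log(0.9)-0.1 > 5 \cdot 10^{-3}$ or $-\log(1-a_e)-a_e = -\log(1+|a_e|)+|a_e| > 4 \cdot 10^{-3}$; the argument for $\rho_e^- > 0.1$ is symmetric via the $w_e^-$ summand. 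In either case $\Delta\Phi_w(f,\hat f) > 4 \cdot 10^{-3}$, contradicting the upper bound.

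The main obstacle is simply careful constant-bookkeeping: the $1/1000$ factor in $\delta$ is precisely what ensures the $O(1/m)$-style upper bound lies strictly below the $\approx 4\cdot 10^{-3}$ lower bound forced by a single congested edge. A secondary subtlety is that the candidate $\hat f^*$ must be feasible and correctly oriented; since the preconditioning adds $m$ undirected $s$-$t$ edges with identical forward and backward capacity, $\hat f^*$ routes $\delta/m$ through each without difficulty, and both $a_e$ and $b_e$ are uniformly controlled.
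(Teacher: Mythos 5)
Your proof is correct, and for the key lower-bound step you take a genuinely different (and cleaner) route than the paper. The upper bound is essentially identical: both you and the paper route $\delta$ uniformly over the preconditioned edges, invoke \cref{lem:precond} to bound the residual capacities, and apply a Taylor estimate to conclude the optimal potential decrement is $O(1/m)\cdot \text{(number of preconditioned edges)} = O(1)$, with the specific constants arranged so that it is a few times $10^{-3}$. Where you diverge is the lower bound. The paper considers the inner product of $\hat f$ with the gradient of $\Delta\Phi_w(f,\cdot)$, uses first-order optimality, and sandwiches this quantity between $0.9\|\rho\|_\infty^2$ and $2.2\,\Delta\Phi_w(f,\hat f)$ via scalar inequalities such as $x + x^2/2 \leq -\log(1-x)$ and their cousins, then combines. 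You instead observe directly that $\Delta\Phi_w(f,\hat f)$ is a sum of nonnegative scalar Bregman terms, so a single edge with $\rho_e^+>0.1$ or $\rho_e^->0.1$ already forces $\Delta\Phi_w(f,\hat f) \geq \min\{-\log(0.9)-0.1,\,-\log(1.1)+0.1\} \approx 4.7\cdot 10^{-3}$, contradicting the upper bound. This is simpler, yields slightly more slack in the constants, and sidesteps a subtlety in the paper's scalar bound $x + x^2/2 \leq -\log(1-x)$, which in fact fails for $x<0$ (though only by a multiplicative constant that doesn't break the argument).

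One small omission worth flagging: the step that is actually computed in the algorithm minimizes the \emph{quadratic extension} of $\Delta\Phi_w(f,\cdot)$ with box size $\ell_e = \hat u_e(f)/10$, not the raw barrier, precisely because the raw barrier has unbounded condition number. Your argument should therefore be phrased for the extended objective: the upper bound is unaffected since $\hat f^*$ lies well inside the box, and the lower bound still holds because the quadratic extension dominates the boundary value $\min\{f(\ell_e),f(-\ell_e)\}$ outside the box (the extension is convex, matches $f$ and $f'$ at $\pm\ell_e$, and $f'(\ell_e)\geq 0$, $f'(-\ell_e)\leq 0$). This is the same remark the paper makes for its gradient argument; it closes your proof cleanly and also confirms, a posteriori, that the minimizers of the extended and unextended problems coincide.
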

\begin{proof}
Consider a flow $f'$ which sends $\frac{2\delta}{m}$ units of flow on each of the $m/2$ preconditioned edges. Certainly the potential decrement flow $\hat{f}$ will have smaller potential decrement than that of $f'$ which is
\begin{align*}
    \Delta\Phi_w(f,f')&= -\sum\limits_{e \in E} w_e^+\log\left(1-\frac{f'_e}{u^+_e-f_e}\right) + w_e^- \log\left(1+\frac{f'_e}{u^+_e-f_e}\right) - f'_e\left(\frac{w_e^+}{u_e^+-f_e} - \frac{w_e^-}{u_e^-+f_e}\right)\\
    &\leq \sum\limits_{e \in E} w_e^+\left(\frac{f'_e}{\hat{u}_e^+(f)}\right)^2 + w_e^-\left(\frac{f'_e}{\hat{u}_e^-(f)}\right)^2\\
    &\leq \|w\|_1 \left(\frac{42\delta}{F^*-F}\right)^2\\
    &< \frac{0.002\|w\|_1}{m}\leq 0.004
\end{align*}
where the second inequality follows from $-\log(1-x) \leq x+x^2$ and $-\log(1+x)\leq -x+x^2$ for non-negative $x$ and the third inequality follows from plugging in the value of the flow on the preconditioned edges and using Lemma \ref{lem:precond}. Finally we use $\|w\|_1=2m$.
Now it suffices to prove a lower bound on the potential decrement in terms of the congestion vector. For this, we start by considering the inner product of $\hat{f}$ with the gradient of the $\Delta\Phi_w(f,\hat{f})$
\begin{align*}
    \sum\limits_{e \in E} \hat{f}_e \left(\frac{w_e^+}{u_e^+-f_e-\hat{f}_e} - \frac{w_e^-}{u_e^-+f_e+\hat{f}_e} - \frac{w_e^+}{u_e^+-f_e} + \frac{w_e^-}{u_e^-+f_e}\right)&= \sum\limits_{e \in E}  \left(\frac{w_e^+\hat{f}_e^2}{(\hat{u}_e^+-\hat{f}_e)\hat{u}^+_e} + \frac{w_e^-\hat{f}_e^2}{(\hat{u}_e^-+\hat{f}_e)\hat{u}_e^-}\right)\\
   & \leq \sum\limits_{e \in E}  1.1\left(\frac{w_e^+\hat{f}_e^2}{(\hat{u}^+_e)^2} + \frac{w_e^-\hat{f}_e^2}{(\hat{u}_e^-)^2}\right)\end{align*}
   \begin{align*}
   &\leq 1.1\sum\limits_{e \in E}  \left(\frac{w_e^+\hat{f}_e^2}{(\hat{u}^+_e)^2} + \frac{w_e^-\hat{f}_e^2}{(\hat{u}_e^-)^2}\right)\\
&\leq 2.2\sum\limits_{e \in E} -w_e^+\log\left(1-\frac{f'_e}{\hat{u}^+_e}\right) - w_e^- \log\left(1+\frac{f'_e}{\hat{u}^+_e}\right) - f'_e\left(\frac{w_e^+}{\hat{u}_e^+} - \frac{w_e^-}{\hat{u}_e^-}\right)\\
&= 2.2\Delta\Phi_w(f,\hat{f})\\
&\leq 0.0088
    \end{align*}
    where the second-to-last inequality follows from $x+x^2/2\leq -\log(1-x)$ and $-x+x^2/2\leq -\log(1+x)$. Strictly speaking, the first inequality only holds for $\hat{f}_e \leq \hat{u}_e(f)/10$. However, instead of considering the inner product of $\hat{f}$ with the gradient of $\Delta\Phi_w(f,\hat{f})$, we will instead consider it's quadratic extension with $\ell_e=\hat{u}_e(f)/10$ for each edge $e$. It is easy to see that if $\hat{f}$ is outside the box, then also the desired inequality still holds (by computing the value the quadratic extension takes on $f'$ in the cases outside the box).
    To finish the proof,
    \begin{align*}
        \sum\limits_{e \in E} \hat{f}_e \left(\frac{w_e^+}{u_e^+-f_e-\hat{f}_e} - \frac{w_e^-}{u_e^-+f_e+\hat{f}_e} - \frac{w_e^+}{u_e^+-f_e} + \frac{w_e^-}{u_e^-+f_e}\right)&= \sum\limits_{e \in E}  \left(\frac{w_e^+\hat{f}_e^2}{(\hat{u}_e^+-\hat{f}_e)\hat{u}^+_e} + \frac{w_e^-\hat{f}_e^2}{(\hat{u}_e^-+\hat{f}_e)\hat{u}_e^-}\right)\\
        &\geq 9/10 \sum\limits_{e \in E}  \left(\frac{w_e^+\hat{f}_e^2}{(\hat{u}^+_e)^2} + \frac{w_e^-\hat{f}_e^2}{(\hat{u}_e^-)^2}\right)\\
        &\geq 0.9\|\rho\|_\infty^2
    \end{align*}
    Hence, combining the above, we get that $\|\rho\|_\infty\leq 0.1$
\end{proof}

Notice that since $\|\rho\|_\infty<0.1$, the minimizer of the quadratic smoothened function is the same as the function without smoothing and hence the new step is well-coupled as per the argument above. Hence, in every iteration, we decrease the amount of flow that we could send multiplicatively by a factor of $1-1/\sqrt{m}$ and hence in $\sqrt{m}$ iterations we will get to a sufficiently small amount of remaining flow that we can round using one iteration of augmenting paths. This completes our $\sqrt{m}$ iteration algorithm.
\section{Improved $m^{4/3+o(1)}U^{1/3}$ Time Algorithm}\label{sec:new}
In this section, we show how to change weights to improve the number of iterations in our algorithm. We will follow the framework of Liu and Sidford \cite{LS20} of finding a set of weights to add under a norm constraint such that the step one would take with respect to the new set of weights maximizes a potential function. In their case, since the step they are taking is an electrical flow, the potential function considered is the energy of such a flow. As our step is different, we will instead take the potential decrement as the potential function with respect to the new set of weights. Perhaps suprisingly however, we can make almost all their arguments go through with minor modifications. Let the initial weights be $w$ and say we would like to add a set of weights $w'$. Then we are interested in maximizing the potential decrement with respect to the new set of weights. This can be seen as similar to designing oracles for multiplicative weight algorithms for two-player games where a player plays a move to penalize the other player the most given their current move. Our algorithm first finds a finds a new set of weights and then takes the potential decrement step with respect to the new weights. Finally, for better control of the congestion vector, we show that one can decrease some of the weight increase like in \cite{LS20}. We first focus on the problem of finding the new set of weights. We are going to introduce a set $r'\in \R^E_{++}$ of "resistances" and will optimize these resistances and then obtain the weights from them.  Let $w$ be the current set of weights and $w'$ be the set of desired changes. Without loss of generality, assume that $\hat{u}_e(f) = \hat{u}_e^+(f)$ and now given a resistance vector $r'$, we define the weight changes as
\begin{align*}
    (w^+_e)'=r'_e(\hat{u}_e^+(f))^2 \text{ and } (w^-_e)'=\frac{(w_e^+)'\hat{u}_e^-(f)}{\hat{u}_e^+(f)}
\end{align*}
This is the same set of weight changes done in \cite{LS20} in the context of energy maximization. This set of weights ensures that our point $(f,y,w)$ is well-coupled with respect to $w+w'$ as well, i.e., \begin{align*}\frac{(w_e^+)'}{\hat{u}^+_e(f)} = \frac{(w_e^-)'}{\hat{u}^-_e(f)}\end{align*}
The problem we would now like to solve is
\begin{align*}
    g(W) = \max\limits_{r'>0, \|r'\|_1\leq W}\min\limits_{B^\top \hat{f}=\delta\chi} \Delta\Phi_{w+w'}(f,\hat{f})
\end{align*}
Here $w'$ is based on $r'$ in the form written above. While this is the optimization problem we would like to solve, we are unable to do so due to the $\ell_1$ norm constraint on the resistances. We will however be able to solve a relaxed $q$-norm version of the problem. \small
\begin{align*}
     &g_q(W) = \max\limits_{r'>0, \|r'\|_q\leq W}\min\limits_{B^\top \hat{f}=\delta\chi} \Delta\Phi_{w+w'}(f,\hat{f})\\
    &= \max\limits_{r'>0, \|r'\|_q\leq W}\min\limits_{B^\top \hat{f}=\delta\chi} \Delta\Phi_w(f,\hat{f}) -\sum\limits_{e \in E} (w_e^+)'\log\left(1-\frac{f'_e}{u^+-f_e}\right) + (w_e^-)' \log\left(1+\frac{f'_e}{u^+-f_e}\right) - f'_e\left(\frac{(w_e^+)'}{u_e^+-f_e} - \frac{(w_e^-)'}{u_e^-+f_e}\right)
\end{align*}\normalsize
Notice that this is a linear (and hence concave) function in $w'$ and hence in $r'$ and is closed and convex in $\hat{f}$ and the constraints are convex as they are only linear and norm ball constraints. Hence, using Theorem \ref{thm:minmaxthm}, we can say that $$\min\limits_{B^\top \hat{f}=\delta\chi} \Delta\Phi_w(f,\hat{f}) -\sum\limits_{e \in E} (w_e^+)'\log\left(1-\frac{f'_e}{u^+-f_e}\right) + (w_e^-)' \log\left(1+\frac{f'_e}{u^+-f_e}\right) - f'_e\left(\frac{(w_e^+)'}{u_e^+-f_e} - \frac{(w_e^-)'}{u_e^-+f_e}\right)$$ is concave in $r'$ and $$\max\limits_{r'>0, \|r'\|_q\leq W} \Delta\Phi_w(f,\hat{f}) -\sum\limits_{e \in E} (w_e^+)'\log\left(1-\frac{f'_e}{u^+-f_e}\right) + (w_e^-)' \log\left(1+\frac{f'_e}{u^+-f_e}\right) - f'_e\left(\frac{(w_e^+)'}{u_e^+-f_e} - \frac{(w_e^-)'}{u_e^-+f_e}\right)$$ is convex in $\hat{f}$. Now, as in \cite{LS20}, we use Sion's minimax lemma to get
\small\begin{align*}
    &\min\limits_{B^\top \hat{f}=\delta\chi}\Delta\Phi_w(f,\hat{f})+\max\limits_{r'>0, \|r'\|_q\leq W}  -\sum\limits_{e \in E} (w_e^+)'\log\left(1-\frac{\hat{f}_e}{u^+-f_e}\right) + (w_e^-)' \log\left(1+\frac{\hat{f}_e}{u^+-f_e}\right) - \hat{f}_e\left(\frac{(w_e^+)'}{u_e^+-f_e} - \frac{(w_e^-)'}{u_e^-+f_e}\right)\end{align*}\normalsize
   \begin{equation}\label{eqn:minmax} \min\limits_{B^\top \hat{f}=\delta\chi}\Delta\Phi_w(f,\hat{f})+W \left[\sum\limits_{e\in E}g_e(\hat{f})^p\right]^{1/p}
   \end{equation}
   where $g_e(\hat{f}) = (\hat{u}_e^+(f))^2\log\left(1-\frac{\hat{f}_e}{\hat{u}^+}\right) + \hat{u}_e^+(f)\hat{u}_e^-(f) \log\left(1+\frac{\hat{f}_e}{\hat{u}_e^-(f)}\right)$ and we plugged in the value of $w'$ in terms of $r'$  and used that $\max\limits_{\|x\|_q\leq W} y^\top x = W\|y\|_p $ with $1/p + 1/q =1$. As mentioned above, the function inside the minimization problem is convex. Furthermore, from the proof of Theorem \ref{thm:minmaxthm}, it can be inferred that any smoothness and strong convexity properties that the function inside the min-max had carries over on the function after the maximization. Hence as in Section \ref{sec:warmup}, we will consider the quadratic extension of the function (as a function of $f$ for the function inside the min-max with $\ell_e = \hat{u}_e(f)/10$. This is just the quadratic extension of $\Delta\Phi_w(f,\hat{f})$ and the quadratic extension of $g_e(f)$.  Now, the strategy will be to consider adding flow using this step while the remaining flow to be routed $F^*-F\geq m^{1/2-\eta}$. After which, running $m^{1/2-\eta}$ iterations of augmenting paths gets us to the optimal solution. We will need to ensure that that throughout the course of the algorithm the $\ell_1$ norm of the weights doesnt get too large. For doing that, we will first compute the weight changes and then do a weight reduction procedure \cite{LS20} in order to always ensure that $\|w\|_1\leq 3m$.

We will take $\eta = 1/6-o(1)-\frac{1}{3}\log_m(U)$ and $W=m^{6\eta}$. Provided we can ensure that the $\|w\|_1\leq 3m$ throughout the course of the algorithm, that the $\ell_\infty$ of the congestion vector is always bounded by a constant and that we can solve the resulting step in almost-linear time, we will obtain an algorithm which runs in time $m^{4/3+o(1)}U^{1/3}$ time.
\begin{theorem} There exists an algorithm for solving $s-t$ maximum flow in directed graphs in time $m^{4/3 + o(1)}U^{1/3}$ time.
\end{theorem}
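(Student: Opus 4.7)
The plan is to iterate the potential-reduction step from \cref{sec:warmup} while updating the weights each round by solving the min--max problem (\ref{eqn:minmax}), and to cap the cumulative weight vector via the weight-reduction primitive of \cite{LS20}. At a well-coupled iterate $(f,y,w)$ of value $F$ with $\|w\|_1 \leq 3m$, I set $\delta = (F^{\ast}-F)/m^{1/2-\eta}$ and, taking $p = \Theta(\log^{2/3-o(1)} n)$ so $m^{1/p} = m^{o(1)}$, I invoke \cref{thm:kpswthm} on the quadratic extension of the inner minimization in (\ref{eqn:minmax}) with cutoffs $\ell_e = \hat{u}_e(f)/10$; modulo the mild modifications mentioned after \cref{lem:kpsw}, the problem fits \cite{KPSW19}'s framework and returns $\hat{f}$ in $m^{1+o(1)}$ time. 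I then recover the maximizing resistances from the equality case of H\"older as $r'_e \propto g_e(\hat{f})^{p-1}$ rescaled to $\|r'\|_q = W$; the induced weight changes $(w^\pm_e)' = r'_e \hat{u}^+_e \hat{u}^\mp_e$ preserve well-coupledness at $(f+\hat{f}, y+\hat{y}, w+w')$ for the dual update prescribed in \cref{sec:warmup}.

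The central task is to certify $\|\rho^\pm\|_\infty \leq 0.1$, guaranteeing both feasibility and that the quadratic extension shares the optimum of (\ref{eqn:minmax}). Mirroring \cref{lem:constcongrootm}, I evaluate the saddle objective at the witness $f'$ routing $2\delta/m$ on each of the $m/2$ preconditioned edges: \cref{lem:precond} with $\|w\|_1 \leq 3m$ gives $\hat{u}_e(f) \geq (F^{\ast}-F)/(21m)$, and a Taylor expansion of the log terms together with $\|g(f')\|_p \leq m^{1/p} \|g(f')\|_\infty$ then shows both $\Delta\Phi_w(f,f')$ and $W\|g(f')\|_p$ are $m^{-\Omega(1)}$ --- the exponent $\eta = 1/6 - o(1) - \tfrac{1}{3}\log_m U$ and budget $W = m^{6\eta}$ are tuned precisely so this holds. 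Conversely, $-\log(1-x) \geq x + x^2/2$ lower-bounds the saddle objective at $\hat{f}$ by $\Omega(\|\rho\|_\infty^2)$, and the two bounds combine to yield $\|\rho\|_\infty \leq 0.1$.

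To maintain the weight budget I appeal to H\"older once more: $\|r'\|_1 \leq m^{1/p}\|r'\|_q = m^{6\eta + o(1)}$, and since $\hat{u}_e \leq 1$ each weight change satisfies $\|w'\|_1 \leq 2\|r'\|_1 = m^{6\eta + o(1)}$. Iterated naively, this blows through $3m$ after $\widetilde{O}(m^{1/2-\eta})$ rounds, so after each step I invoke the weight-reduction routine of \cite{LS20}: it shrinks the heaviest new weights while compensating via an auxiliary flow whose own congestion is $o(1)$ so that the bound of the previous paragraph is preserved up to constants, and the amortized analysis in \cite{LS20} ensures $\|w\|_1 \leq 3m$ throughout, keeping \cref{lem:precond} in force at every iterate.

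Each iteration costs $m^{1+o(1)}$ via \cref{thm:kpswthm} and drops $F^{\ast} - F$ by a multiplicative factor $1 - m^{-(1/2-\eta)}$, so $\widetilde{O}(m^{1/2-\eta})$ iterations bring it down to at most $m^{1/2-\eta}$, after which $m^{1/2-\eta}$ rounds of $O(m)$-time augmenting paths finish the computation. The total runtime is $\widetilde{O}(m^{3/2-\eta+o(1)}) = \widetilde{O}(m^{4/3+o(1)}U^{1/3})$. I anticipate the dominant difficulty to be the $\ell_\infty$ certification of the second paragraph: unlike the $\ell_2$ energy potential of \cite{LS20}, whose quadratic structure directly regularizes each coordinate, the $\ell_p$ penalty on $g(\hat{f})$ only provides mixed-norm control, and extracting a uniform per-edge bound requires a delicate interplay between the self-concordance of $\Delta\Phi_w$ and the $p$-norm regularizer.
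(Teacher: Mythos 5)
Your high-level plan — iterate the min--max step of \cref{sec:new} with the weight-reduction primitive, bound the congestion, and tune $\eta$ and $W$ as the paper does — is the right skeleton, but the central certification step of your second paragraph does not work as written, and it is exactly the place you yourself flag as the anticipated difficulty.

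You claim that evaluating the saddle objective at the preconditioned witness $f'$ shows both $\Delta\Phi_w(f,f')$ and $W\|g(f')\|_p$ are $m^{-\Omega(1)}$, and that a pointwise lower bound $\Omega(\|\rho\|_\infty^2)$ on the objective at $\hat f$ then closes the argument. This is the warm-up (\cref{lem:constcongrootm}) argument transplanted verbatim, but here $\delta = (F^*-F)/m^{1/2-\eta}$ is larger by a factor $m^{\eta}$, so the witness bound is only $\Delta\Phi_w(f,f') \lesssim \|w\|_1(\delta/(F^*-F))^2 = \Theta(m^{2\eta})$ (and similarly $W\|g(f')\|_p = \Theta(m^{2\eta})$), not $m^{-\Omega(1)}$. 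With an objective upper bound of order $m^{2\eta}\gg 1$, the lower bound $\Omega(\|\rho\|_\infty^2)$ gives you nothing. The actual proof (\cref{lem:cong}) has to split the certification into two independent halves: (i) dropping the $\Delta\Phi_w$ term and using $g_e(\hat f) \geq \hat f_e^2$ together with the budget $W = m^{6\eta}$ to deduce $|\hat f_e| \leq O(m^{-2\eta})$; and (ii) dropping the $p$-norm term, extracting the dual potential $\hat y$ from the KKT conditions, using the fact that the support of $\hat f$ decomposes into $s$--$t$ paths (the DAG argument) to get $\hat y_s - \hat y_t \geq \hat y_u - \hat y_v$, and hence $1/\hat u_e(f) \leq O(m^{2\eta})$. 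Multiplying (i) and (ii) gives $\rho_e \leq 0.1$. Neither half alone bounds $\|\rho\|_\infty$; your proposal never invokes the dual/DAG decomposition at all, so it is missing half of the argument.

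Two smaller inaccuracies: the weight-reduction step is not a compensating auxiliary flow — it is the deterministic shrinkage of \cite{LS20} which finds the minimal $w''$ preserving the coupling identity and gains a factor $O(\max\{\rho_e^+,\rho_e^-\}) = O(m^{-2\eta})$ per edge, which is what drops $\|w'\|_1$ from $m^{6\eta+o(1)}U^2$ to $m^{4\eta+o(1)}U$ and is essential to keeping $\|w\|_1 \leq 3m$ over the $m^{1/2-\eta}$ rounds. And the paper takes $p = \sqrt{\log m}$, comfortably inside the window $[\omega(1), o(\log^{2/3-o(1)} n)]$ of \cref{thm:kpswthm}; your $p = \Theta(\log^{2/3-o(1)} n)$ sits at the boundary of that window and is not what is needed for $m^{1/p} = m^{o(1)}$ anyway.
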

To summarize, our algorithm starts off with $(f,y)=(0,0)$ and $w_e^+=w_e^-=1$ for all edges $e$. Then in each iteration, starting with a well-coupled $(f,y,w)$ with flow value $F$ and $\delta=(F^*-F)/m^{1/2-\eta}$ and $W=m^{6\eta}$ we then solve Equation \ref{eqn:minmax} (which is the potential decrement problem with the new weights) problem to obtain $\hat{f}$ which will be the step we will take (and has flow value $F+\delta$ and then all that remains is to actually find the update weights $w'$ which will have a closed form expression in terms of $\hat{f}$ and then we perform a weight reduction step to obtain the new $w'$ which ensures that we still remain well-coupled for $\hat{f}$ and repeat while $F^*-F\geq m^{1/2-\eta}$. Finally, we round the remaining flow using $m^{1/2-\eta}$ iterations of augmenting paths. We first state the lemma the proof of which is similar to Lemma \ref{lem:constcongrootm}
\begin{lemma}\label{lem:cong}Let $(f,y,w)$ be a well-coupled solution with value F and let $\delta=\frac{F^*-F}{5000m^{1/2-\eta}}$. Let $\hat{f}$ be the solution to the potential decrement problem considered in Equation \ref{eqn:minmax}. Then, we have for all edges $e$ that $\rho^+_e,\rho^-_e\leq 0.1$ and $|\hat{f}_e|\leq 9m^{-2\eta}$
\end{lemma}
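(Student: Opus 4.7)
My plan is to mimic the proof of \cref{lem:constcongrootm}, with the new $W(\sum_e g_e(\cdot)^p)^{1/p}$ term in \cref{eqn:minmax} supplying the additional pointwise bound on $\hat f$. First I would take the same reference flow $f'$ that routes $2\delta/m$ on each of the $m/2$ preconditioned $s$-$t$ edges and zero on the original edges; this is a feasible $\delta\chi_{s,t}$-flow and, by \cref{lem:precond} together with $\|w\|_1\leq 3m$, each nonzero $f'_e/\hat u_e(f)$ is at most $42\delta/(F^*-F)=42/(5000\,m^{1/2-\eta})$.

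Next, I would upper bound the two summands of \cref{eqn:minmax} separately at $f'$. For the $\Delta\Phi_w(f,f')$ piece, the bounds $-\log(1-x)\leq x+x^2$ and $-\log(1+x)\leq -x+x^2$ on small arguments, together with cancellation of the linear terms via well-coupledness, yield $\Delta\Phi_w(f,f')\leq \|w\|_1(42\delta/(F^*-F))^2$, exactly as in \cref{lem:constcongrootm}. For the new $p$-norm summand, a second-order Taylor expansion gives $g_e(f')\leq 2(f'_e)^2$ under the WLOG convention $\hat u_e^+(f)\leq \hat u_e^-(f)$, and since $f'$ is supported on $m/2$ edges with equal value, $\|g(f')\|_p\leq (m/2)^{1/p}\cdot 2(2\delta/m)^2$; taking $p$ near the top of the range permitted by \cref{thm:kpswthm}, the $m^{1/p}$ factor is absorbed into $m^{o(1)}$, and multiplying by $W=m^{6\eta}$ yields a bound of the same order as the $\Delta\Phi_w(f,f')$ piece after plugging in $\delta$.

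Since $\hat f$ minimizes \cref{eqn:minmax} and each summand is individually non-negative (the $p$-norm summand because the well-coupled choice of $w'$ makes $\Delta\Phi_{w'}(f,\cdot)$ a Bregman-like divergence), each summand at $\hat f$ is at most the combined upper bound from the previous step. I would now translate these into the two advertised conclusions. For $\|\rho\|_\infty\leq 0.1$, I would reuse the inner-product-with-gradient chain from \cref{lem:constcongrootm}: on the quadratic extension of $\Delta\Phi_w(f,\cdot)$ with $\ell_e=\hat u_e(f)/10$ the inequalities $x+x^2/2\leq -\log(1-x)$ and $-x+x^2/2\leq -\log(1+x)$ are global, giving $\hat f\cdot\nabla\Delta\Phi_w(f,\hat f)\geq 0.9\sum_e[w_e^+(\rho_e^+)^2 + w_e^-(\rho_e^-)^2]$, which combined with $w_e^\pm\geq 1$ becomes a bound on $\|\rho\|_\infty^2$. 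For $|\hat f_e|\leq 9m^{-2\eta}$, I use $g_e(\hat f)\geq c\,\hat f_e^{\,2}$ for a universal constant $c$ (from the matching second-order Taylor lower bound on the quadratic extension) together with $\|\cdot\|_p\geq \|\cdot\|_\infty$; dividing by $W=m^{6\eta}$ then gives $\max_e\hat f_e^{\,2}\leq O(m^{-4\eta+o(1)})$.

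The main difficulty I expect is making the numerical constants balance so that both conclusions hold simultaneously with the prescribed $\delta,W,\eta,p$: the $\Delta\Phi_w$-based upper bound naively scales like $m^{2\eta}$, so achieving the constant $0.1$ on $\|\rho\|_\infty$ seemingly requires exploiting the $p$-norm penalty to push the minimizer $\hat f$ away from the reference flow $f'$ rather than simply inheriting the reference bound, and the asymmetric residual case $\hat u_e^+\ll \hat u_e^-$ has to be handled via the WLOG choice $\hat u_e=\hat u_e^+$ to keep the $g_e\lesssim \hat f_e^{\,2}$ upper bound uniform. One must also verify that the minimizer of the quadratically-extended objective coincides with that of the original, so that the Taylor inequalities invoked at the true $\hat f$ are legitimate.
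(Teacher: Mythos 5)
Your setup and the $|\hat f_e|$ bound are essentially the paper's: you evaluate the objective of \cref{eqn:minmax} at the reference flow $f'$ routing $2\delta/m$ on each preconditioned edge, split off the $p$-norm term, use $g_e(f')\le 2(f'_e)^2$ and $\|\cdot\|_p\le m^{1/p}\|\cdot\|_\infty$ with $p=\sqrt{\log m}$, and then compare $W\|g(\hat f)\|_p\ge W\,g_e(\hat f)\gtrsim W\hat f_e^{\,2}$ against this upper bound. That part is fine.

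The gap is in the congestion bound, and you correctly flag it yourself but do not close it. Reusing the inner-product-with-gradient chain of \cref{lem:constcongrootm} gives $0.9\sum_e\big[w_e^+(\rho_e^+)^2+w_e^-(\rho_e^-)^2\big]\le 2.2\,\Delta\Phi_w(f,\hat f)$, and the right-hand side here is only $O(m^{2\eta})$ (not a constant), because $\delta$ has been scaled up by $m^{\eta}$ relative to the warm-up. So this route yields $\|\rho\|_\infty^2=O(m^{2\eta})$, which is vacuous for $\eta>0$. The paper does something genuinely different: it bounds the two \emph{factors} of $\rho_e=|\hat f_e|/\hat u_e(f)$ separately and multiplies. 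The first factor is your $|\hat f_e|\le 0.009\,m^{-2\eta}$. For the second factor, assume for contradiction that $\rho_e>0.1$ and use the optimality conditions to produce dual potentials $\hat y$ with $\Delta\Phi_w(f,\hat f)\gtrsim \hat f^\top B\hat y=\delta\,\chi^\top\hat y$; then one argues (via the fact that the flow $\hat f$ induces a DAG and hence decomposes into $s$--$t$ paths, so the potential drop across any edge is at most $\hat y_s-\hat y_t$) that $\chi^\top\hat y\gtrsim \rho_e/\hat u_e(f)$. Combined with $F^*-F\ge m^{1/2-\eta}$ this gives $1/\hat u_e(f)\le 9m^{2\eta}$, and the product $|\hat f_e|\cdot (1/\hat u_e(f))\le 0.081<0.1$ finishes. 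This DAG/path-decomposition step, establishing an $\ell_\infty$ bound on the residual-capacity reciprocal from a dual-potential telescoping, is the missing idea; without it your argument does not reach a constant bound on $\|\rho\|_\infty$ at the $\delta$ you are targeting.
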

We will prove this lemma in the Appendix \ref{app:missingProofs}. Next notice that $(f,y)$ are still a well-coupled solution with respect to the new weights $w+w'$ as the weights were chosen to ensure that the coupling condition is unchanged.
\begin{lemma}\label{lem:wtcontrol}
Our new weights, after weight reduction, satisfy $\|w'\|_1\leq m^{4\eta+o(1)}U\leq m/2$ and $(f+\hat{f},y+\hat{y})$ is well-coupled with respect to $w+w'$
\end{lemma}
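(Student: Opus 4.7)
The proof splits into two claims: well-coupling of $(f+\hat{f}, y+\hat{y})$ against $w+w'$, and the $\ell_1$ bound on $w'$. The plan is to dispatch well-coupling first, since it follows from a near-verbatim reuse of the KKT calculation from \cref{sec:warmup}, and then turn to the $\ell_1$ bound, which is where the real work lies.

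For well-coupling, the key observation is that the specific parameterization $(w_e^+)' = r_e' (\hat{u}_e^+(f))^2$ and $(w_e^-)' = r_e' \hat{u}_e^+(f)\hat{u}_e^-(f)$ satisfies $(w_e^+)'/\hat{u}_e^+(f) = r_e' \hat{u}_e^+(f) = (w_e^-)'/\hat{u}_e^-(f)$, so the contribution of $w'$ to the coupling residual at $f$ vanishes identically. Hence $(f,y,w+w')$ is already well-coupled. The potential decrement problem in \cref{eqn:minmax} with weights $w+w'$ is convex, and its KKT conditions yield a dual update $\hat{y}$ satisfying
\[
  \hat{y}_v - \hat{y}_u = \left(\frac{w_e^+ + (w_e^+)'}{\hat{u}_e^+(f)-\hat{f}_e} - \frac{w_e^- + (w_e^-)'}{\hat{u}_e^-(f)+\hat{f}_e}\right) - (y_v - y_u),
\]
mirroring the calculation in \cref{sec:warmup}; rearranging gives well-coupling of $(f+\hat{f}, y+\hat{y})$ against $w+w'$. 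The weight reduction procedure of \cite{LS20} only rescales the resistances $r_e'$ on each edge while keeping the ratio $(w_e^+)'/\hat{u}_e^+(f) = (w_e^-)'/\hat{u}_e^-(f)$ intact, and thereby preserves this coupling.

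For the $\ell_1$ bound, I would first obtain a loose bound via \Holder's inequality. Since $\hat{u}_e^+ + \hat{u}_e^- \leq 2U$,
\[
  \|w'\|_1 = \sum_e r_e'\, \hat{u}_e^+(f)\bigl(\hat{u}_e^+(f) + \hat{u}_e^-(f)\bigr) \leq 2U \,\|r'\|_q\, \|\hat{u}^+\|_p \leq 2U^2\, m^{6\eta + 1/p},
\]
where $p, q$ are conjugates and $p$ is taken as in \cref{thm:kpswthm} so that $m^{1/p} = m^{o(1)}$. This is off from the target bound by roughly a factor of $m^{2\eta} U$. The plan is to close the gap via the weight reduction of \cite{LS20}: using the congestion bound $|\hat{f}_e| \leq 9 m^{-2\eta}$ from \cref{lem:cong}, edges whose congestion is substantially below the $\ell_\infty$ cap are carrying ``too much'' resistance relative to their contribution to the potential decrement, and $r_e'$ can be scaled down on them without disturbing well-coupling. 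Carefully aggregating these reductions via the $p$-norm optimality of $\hat{f}$ in \cref{eqn:minmax} yields $\|w'\|_1 \leq m^{4\eta + o(1)} U$. Substituting $\eta = 1/6 - o(1) - \tfrac{1}{3} \log_m U$ gives $\|w'\|_1 \leq m^{2/3 - o(1)} \leq m/2$.

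The main obstacle is verifying that the weight reduction analysis of \cite{LS20}, originally developed for the energy-maximization oracle, ports cleanly to our potential-decrement oracle. The hope is that it does, because the reduction only uses two ingredients: the edge-wise congestion bound from \cref{lem:cong}, and the fact that rescaling a single $r_e'$ preserves the well-coupling structure. Both survive the change from energy to potential decrement. The cleanest way to organize the argument is to isolate a self-contained lemma: given any $\hat{f}$ arising from \cref{eqn:minmax} with $\|\rho\|_\infty = O(1)$ and $\|\hat{f}\|_\infty = O(m^{-2\eta})$, the weight reduction applied to the associated $r'$ produces weights of $\ell_1$ norm $m^{4\eta + o(1)} U$ while keeping well-coupling intact.
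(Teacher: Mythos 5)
Your skeleton matches the paper's---well-coupling via the KKT conditions of \eqref{eqn:minmax}, then a pre-reduction $\ell_1$ bound via \Holder's inequality and an improvement via the weight-reduction procedure of \cite{LS20}---and your \Holder computation $\|w'\|_1 \leq 2U^2 m^{6\eta+o(1)}$ is correct and agrees with the paper. One detail worth flagging on the well-coupling side: the paper's KKT calculation works with the post-Sion objective (which has the $p$-norm term, not a fixed $w'$), and it is from that form that one reads off the explicit optimal resistances $r_e' = W g_e^{p-1}/\|g\|_p^{p-1}$ with $\|r'\|_q = W$; your version implicitly presumes you already know $r'$ sits exactly on the $q$-ball boundary, which is what feeds the \Holder step.

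The real gap is in how you characterize the weight reduction, which is precisely where the missing factor of roughly $m^{-2\eta}/U$ must be produced. The reduction does \emph{not} ``rescale $r_e'$ on each edge while keeping the ratio $(w_e^+)'/\hat{u}_e^+ = (w_e^-)'/\hat{u}_e^-$ intact'': any per-edge rescaling of $r_e'$ would rescale the coupling residual at the new point $f+\hat f$, not preserve it, and it would give no $\ell_1$ savings at the rate you need. What the paper does instead is find the componentwise-smallest nonnegative $w''$ matching the coupling residual of $w'$ at $f+\hat f$, which generically zeros out one of $(w_e^+)''$ or $(w_e^-)''$ per edge and breaks the ratio. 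The crucial inequality you never state is
\begin{align*}
(w_e^+)'' + (w_e^-)'' \;\leq\; O\bigl(\max\{\rho_e^+,\rho_e^-\}\bigr)\,\bigl((w_e^+)' + (w_e^-)'\bigr),
\end{align*}
which holds exactly because $w'$ was parameterized so that its coupling residual is \emph{zero at the old point} $f$, making the residual at $f+\hat f$ an $O(\rho_e)$-sized quantity relative to $(w_e^+)'+(w_e^-)'$. Combining $\max\{\rho_e^+,\rho_e^-\} = |\hat f_e|/\hat u_e(f)$ with $(w_e^+)'+(w_e^-)' = r_e'\hat u_e^+(\hat u_e^+ + \hat u_e^-)$ cancels a capacity factor, and then $|\hat f_e|\leq 9m^{-2\eta}$ from \cref{lem:cong} plus another \Holder application over $r'$ yields $\|w''\|_1 \leq O(m^{4\eta+o(1)}U)$. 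The savings is uniform over all edges via the $\ell_\infty$ bound; it is not an argument about low-congestion edges ``carrying too much resistance,'' and your stated intuition does not recover the required inequality.
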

\begin{proof}
Using optimality conditions of the program in Equation \ref{eqn:minmax}, we see that there exists a $\hat{y}$ such that
\begin{align*}
    \hat{y}_v-\hat{y}_u= \hat{f}_e \left(\frac{w_e^+}{(\hat{u}^+_e-\hat{f}_e)\hat{u}_e^+} - \frac{w_e^-}{(\hat{u}_e^-+\hat{f}_e)\hat{u}_e^-}\right)+W\hat{f}_e\frac{g_e^{p-1}}{\|g\|_p^{p-1}}\left(\frac{\hat{u}_e^+}{\hat{u}^+_e-\hat{f}_e}-\frac{\hat{u}^-_e}{\hat{u}^-_e+\hat{f}_e} \right)
\end{align*}
where $g\in\R^E$ is the vector formed by taking $g_e(\hat{f})$ for the $e^{th}$ coordinate. We will take \begin{align*}
   (r_e)'=W\frac{g_e^{p-1}}{\|g\|_p^{p-1}} \text{ and } (w_e^+)' = W\frac{g_e^{p-1}}{\|g\|_p^{p-1}}(\hat{u}_e^+)^2 \text{ and } (w_e^-)' = W\frac{g_e^{p-1}}{\|g\|_p^{p-1}}(\hat{u}_e^+\hat{u}_e^-)
\end{align*}
which satisfies the well-coupling condition we want to ensure. Also notice that $\|r\|_q=W$ so we satisfy the norm ball condition as well. Now, we need to upper bound the $\ell_1$ norm of $w'$. We will take $p=\sqrt{\log m}$
\begin{align*}
    \|w'\|_1&\leq m^{1/p}\|w'\|_q\\
    &\leq m^{o(1)}\left(\sum\limits_{e \in E}(w_e^+)'+(w_e^-)'\right)^{1/q}\\
    &\leq 2m^{o(1)} WU^2=O(m^{6\eta+o(1)}U^2)
\end{align*}
as $\hat{u}_e^+,\hat{u}_e^-\leq U$. Plugging in the value of $\eta$, we get that this is less than $m/2$. Now, we will perform weight reductions to obtain a new set of weights $w''$ such that they still ensure the coupling condition doesnt change and we can establish better control on the weights. The weight reduction is procedure is the same as that in \cite{LS20} where we find the smallest non-negative $w''$ such that for all edges
\begin{align*}
    \frac{(w_e^+)'}{\hat{u}_e^+-\hat{f}_e}-\frac{(w_e^-)'}{\hat{u}_e^-+\hat{f}_e}=\frac{(w_e^+)''}{\hat{u}_e^+-\hat{f}_e}-\frac{(w_e^-)''}{\hat{u}_e^-+\hat{f}_e}
\end{align*}
Notice that we also have that $
    \frac{(w_e^+)'}{\hat{u}_e^+}=\frac{(w_e^-)'}{\hat{u}_e^-}
$ and
\begin{align*}
    \frac{\hat{u}_e^+-\hat{f}_e}{\hat{u}_e^-+\hat{f}_e}=(1\pm O(\max\{\rho_e^+,\rho_e^-\})\frac{\hat{u}_e^+}{\hat{u}_e^-}
\end{align*}
Hence, it follows that
\begin{align*}
    (w_e^+)'' + (w_e^-)'' \leq O(\max\{\rho_e^+,\rho_e^-\}) ((w_e^+)' + (w_e^-)')
\end{align*}
As $|\hat{f}_e|\leq 9m^{-2\eta}$ from Lemma \ref{lem:cong}, we get
\begin{align*}
    \|w''\|_1&\leq m^{-2\eta}\sum\limits_{e \in E}\frac{W g_e^{p-1}}{\|g\|_p^{p-1}}(\hat{u}_e^++\hat{u}_e^-)\\
    &\leq O(m^{4\eta+o(1)}U)\leq m/2
\end{align*}

As before, while this argument is done for the non-quadratically extended function while we are optimizing the quadartically extended function, as our $\rho^+_e,\rho^-_e\leq 0.1$, the minimizers are the same and hence the above argument works.
\end{proof}
Now, provided that we can show how to solve Equation \ref{eqn:minmax} in almost-linear time, we are done. This is because we run the algorithm for $m^{1/2-\eta}$ iterations and the $\ell_1$ norm of the weights increases by at most $m^{4\eta+o(1)}U$ in each iteration. Hence the final weights are $\|w\|_1 \leq 2m + m^{1/2+3\eta+o(1)}U\leq 5m/2$. So we can use Lemma \ref{lem:precond} throughout the course of our algorithm. Also, as mentioned above, notice that the flow $\hat{f}$ that we augment in every iteration is just the solution to the potential decrement problem with the new weights. Hence, from the argument in Section \ref{sec:warmup}, we always maintain the well-coupled condition.

To show that we can solve the problem in Equation \ref{eqn:minmax}, we will appeal to the work of \cite{KPSW19}. As mentioned above, their work establishes Lemma \ref{lem:kpsw} and then shows that for any function  which can be sandwiched in that form plus a quadratic term which is the same on both sides, one can just minimize the resulting upper bound to get a solution to the optimization problem with quasi-polynomially low error. Hence, we will focus on showing that the objective function in our problem can also be sandwiched into terms of this form after which appealing to their algorithm, we will get a high accuracy solution to our problem in almost linear time. The first issue that arises is that srictly speaking, their algorithm only works for minimizing objectives of the form
\begin{align*}
    OPT=\min\limits_{B^\top f=\chi} \sum\limits_{e \in E} g_e f_e + r_e f_e^2 + |f_e|^p
\end{align*}
whereas for our objective, the $p$-norm part is not raised to the power $p$ but is just the $p$-norm itself. The solution for this however was already given in Liu-Sidford \cite{LS20} where they show (Lemma B.3 in their paper) that for sufficiently nice functions minimizing problems of the form $\min\limits f(x)+h(g(x))$ can be obtained to high accuracy if we can obtain minimizers to functions of the form $f(x)+g(x)$. The conditions they require on the functions are also satisfied for our functions and is a straightforward calculation following the proof in their paper \cite{LS20}. Hence, we can focus on just showing how to solve the following problem
\begin{align*}
    OPT=\min\limits_{B^\top \hat{f} = \chi} \sum\limits_{e\in E} -\left(w_e^+\log_{0.1}\left(1-\frac{\hat{f}_e}{\hat{u}_e^+}\right) + w_e^-\log_{0.1}\left(1+\frac{\hat{f}_e}{\hat{u}_e^-}\right) + \hat{f}_e\left(\frac{w_e^+}{\hat{u}_e^+} - \frac{w_e^+}{\hat{u}_e^-} \right)\right)+ (g_e)_{0.1}(\hat{f})^p
\end{align*}
Where the subscripts of $0.1$ denote that we are solving the quadratically smoothened function with the box size being $\hat{u}_e(f)/10$ for each $e$ and $g_e(\hat{f}) = (\hat{u}_e^+(f))^2\log\left(1-\frac{\hat{f}_e}{\hat{u}^+}\right) + \hat{u}_e^+(f)\hat{u}_e^-(f) \log\left(1+\frac{\hat{f}_e}{\hat{u}_e^-(f)}\right)$ Call the term in the sum for a given edge $e$ as $val_e(\hat{f})$ and the overall objective function is $val(\hat{f})$. In particular,  we consider for a single edge and prove the following lemma
\begin{lemma}\label{lem:sandwhich}
We have the following for any feasible $f$ and $\delta \geq 0$
\begin{align*}
    val_e(f) + \delta\partial_f val_e(f)+ (9/10)^2\delta^2\left(\frac{w_e^+}{(\hat{u}_e^+-f)^2}+\frac{w_e^-}{(\hat{u}_e^-+f)^2}\right) + 2^{-O(p)}(f_e^{2p-4}\delta^2 + \delta^p) \leq val_e(f+\delta)
\end{align*}
and
\begin{align*}
val_e(f+\delta) \leq val_e(f) + \delta\partial_f val_e(f)+(11/10)^2 \delta^2\left(\frac{w_e^+}{(\hat{u}_e^+-f)^2}+\frac{w_e^-}{(\hat{u}_e^-+f)^2}\right) + 2^{O(p)}(f_e^{2p-4}\delta^2 + \delta^p)
\end{align*}
where $\partial_x$ denotes the derivative of a function with respect to $x$.
\end{lemma}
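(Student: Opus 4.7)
The argument decomposes $val_e(f) = B_e(f) + G_e(f)$, where $B_e$ collects the quadratically extended weighted log-barrier terms together with the linear correction and $G_e(f) = (g_e)_{0.1}(f)^p$. I would prove the two directions of the sandwich separately for each piece and then sum.

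For the barrier part $B_e$, I would apply Taylor's theorem with integral remainder,
$B_e(f+\delta) - B_e(f) - \delta B_e'(f) = \int_0^\delta (\delta-s) B_e''(f+s)\,ds$,
noting that $B_e''(f+s) = w_e^+/(\hat{u}_e^+ - f - s)^2 + w_e^-/(\hat{u}_e^- + f + s)^2$ in the regime where the quadratic extension is inactive. Inside the box $|s|\le\hat{u}_e(f)/10$, each residual capacity shifts by at most ten percent, so each summand of $B_e''(f+s)$ lies in the multiplicative interval $[(10/11)^2,(10/9)^2]$ relative to its value at $s = 0$; outside the box the quadratic extension fixes $B_e''$ at the boundary value, where the same sandwich still holds by construction. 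Integrating $(\delta - s)$ against this pointwise sandwich bounds the quadratic residual by $(9/10)^2\delta^2 Q$ from below and $(11/10)^2\delta^2 Q$ from above, with $Q = w_e^+/(\hat{u}_e^+-f)^2 + w_e^-/(\hat{u}_e^-+f)^2$, matching the quadratic pieces of both inequalities.

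For the $p$-th power $G_e$, I would set $\phi := (g_e)_{0.1}(f)$ and $\psi := (g_e)_{0.1}(f+\delta) - (g_e)_{0.1}(f)$ and apply \cref{lem:kpsw} to $(\phi+\psi)^p$, which gives the sandwich $\phi^p + p\phi^{p-1}\psi + 2^{\pm O(p)} h_p(\phi^{p-2},\psi)$. Then I would subtract the gradient contribution $\delta G_e'(f) = p\phi^{p-1} g_e'(f)\delta$ and bound the leftover linear-in-$\delta$ discrepancy $p\phi^{p-1}(\psi - \delta g_e'(f))$ by another Taylor expansion of $g_e$: since $g_e(0) = g_e'(0) = 0$ and $g_e''(f) = -(\hat{u}_e^+)^2/(\hat{u}_e^+ - f)^2 - \hat{u}_e^+\hat{u}_e^-/(\hat{u}_e^-+f)^2$, we have $\psi - \delta g_e'(f) = O(\delta^2)$ near $f$, which folds into the quadratic $f^{2p-4}\delta^2$ term. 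To convert $\phi^{p-2}\psi^2 + \psi^p$ into $f^{2p-4}\delta^2 + \delta^p$, I would use $\phi \asymp f^2$ (from the expansion of $g_e$ near $0$) together with KPSW's case analysis: when $|\delta|$ is small relative to $|f|$ the Taylor second-order piece dominates and produces $f^{2p-4}\delta^2$, and when $|\delta|$ is large (or the quadratic extension is active), $|\psi|$ is itself controlled by a quadratic in $\delta$ so the $\delta^p$ tail of $h_p$ absorbs the remainder, with the $2^{\pm O(p)}$ factors carrying through.

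Summing the barrier and $p$-power contributions then gives both directions of the stated sandwich. The main obstacle I expect is the bookkeeping around the quadratic extension when $|\delta| > \hat{u}_e(f)/10$: inside the box the Taylor arguments for $B_e$ and $g_e$ are entirely standard, but outside one must write out the glued quadratic pieces explicitly to confirm that $(g_e)_{0.1}$ remains quadratic in $\delta$ and that the $\delta^p$ tail of $h_p(f^{2p-4},\delta)$ is exactly what takes over, rather than some spurious higher power. Once this is made explicit, the KPSW-style case analysis on $|\delta|$ versus the base point transfers with only notational changes and yields the claimed constants $2^{\pm O(p)}$.
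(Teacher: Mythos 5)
Your decomposition into the barrier/quadratic piece and the $p$-th power piece, the Taylor-remainder argument on the barrier with the residual capacity shrinking or growing by at most a factor $9/10$ or $11/10$ inside the box, the application of \cref{lem:kpsw} with the substitution $f_i \mapsto g_e(f)$, $\delta_i \mapsto g_e(f+\delta)-g_e(f)$, the conversion via $g_e(x) \asymp x^2$ and a second Taylor bound on $g_e(f+\delta)-g_e(f)$ into the $f^{2p-4}\delta^2 + \delta^p$ form, and the separate verification on the quadratically extended region are all the same steps the paper carries out; the only cosmetic deviations are using the integral form of the Taylor remainder rather than the mean-value form, and bounding the leftover linear discrepancy $p\phi^{p-1}(\psi - \delta g_e'(f))$ by an explicit Taylor estimate where the paper instead drops it by appealing to concavity of $g_e$. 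This is essentially the paper's proof.
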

We prove this lemma in Appendix \ref{app:missingProofs}. Let $r_e = \left(\frac{w_e^+}{(\hat{u}_e^+-f)^2}+\frac{w_e^-}{(\hat{u}_e^-+f)^2}\right)$
\begin{lemma}\label{lem:relconv} Now, given an initial point $f_0$ such that $B^\top f_0 = \chi$ and an almost linear time solver for the following problem
\begin{align*}
    \min\limits_{B^\top \delta = 0} \sum\limits_{e \in E}
    \delta_e \alpha_e + (11/10)^22^{O(p)}((r_e+f_e^{2p-4})\delta^2 + \delta^p)
\end{align*}
where the $\alpha_e$ vector is the gradient of $val$ at a given point $f$, we can obtain an $\hat{f}$ in $\widetilde{O}_p(1)$ calls to the solver such that $val(\hat{f})\leq OPT+ 1/2^{\poly\log m}$
\end{lemma}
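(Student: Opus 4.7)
The plan is to apply the standard iterative refinement framework of \cite{AKPS19,KPSW19} using Lemma \ref{lem:sandwhich} as the two-sided approximation. Lemma \ref{lem:sandwhich} is precisely the analogue of Lemma \ref{lem:kpsw} adapted to our objective $val$: it sandwiches the change $val(f+\delta)-val(f)-\delta^{\top}\nabla val(f)$ between $2^{-O(p)}Q(\delta)$ and $2^{O(p)}Q(\delta)$, where
\[
Q(\delta) \;=\; \sum_{e\in E} \bigl((r_e + f_e^{2p-4})\,\delta_e^2 + |\delta_e|^p\bigr)
\]
(absorbing the constants $(9/10)^2$ and $(11/10)^2$ into the $2^{\pm O(p)}$ factors). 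Thus the residual problem
\[
\min_{B^{\top}\delta = 0}\; \delta^{\top}\alpha + (11/10)^2 \,2^{O(p)}\,Q(\delta)
\]
is, up to the $2^{O(p)}$ ratio, an \emph{upper model} for $val(f+\delta)-val(f)$ whose minimizer is always a feasible descent direction.

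First I would show that one step of the solver yields multiplicative progress. Let $\delta^{*}$ be the optimal residual of $val$ at $f$, i.e.\ $\delta^{*}$ attains $OPT-val(f)$. By the \emph{lower} bound in Lemma \ref{lem:sandwhich} applied with step $\delta^{*}$, we have $2^{-O(p)}Q(\delta^{*}) + (\delta^{*})^{\top}\alpha \le OPT - val(f) \le 0$. By the \emph{upper} bound applied with the step $\widetilde\delta$ returned by the solver, $val(f+\widetilde\delta)-val(f) \le \widetilde\delta^{\top}\alpha + 2^{O(p)}Q(\widetilde\delta)$. Since $\widetilde\delta$ approximately minimizes the upper model (and in particular beats the scaled candidate $2^{-O(p)}\delta^{*}$, which is feasible because the constraint is linear and homogeneous), combining the two inequalities gives $val(f+\widetilde\delta)-OPT \le (1-2^{-O(p)})(val(f)-OPT)+\epsilon_{\text{solver}}$ for the solver's internal error $\epsilon_{\text{solver}}$.

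Next I would iterate: applying this contraction $N$ times produces $f^{(N)}$ with $val(f^{(N)})-OPT \le (1-2^{-O(p)})^{N}(val(f^{(0)})-OPT) + O(N\,\epsilon_{\text{solver}})$. Taking $N = 2^{O(p)}\cdot\poly\log m = \widetilde O_p(1)$ and using the quasi-polynomially small error guarantee that Theorem \ref{thm:kpswthm} (as restated in \cite{LS20}) provides for the inner solver, the final error drops below $1/2^{\poly\log m}$, which is exactly what the lemma claims. Each outer iteration invokes the inner solver once, so the total cost is $\widetilde O_p(1)$ solver calls as stated.

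The one subtlety to check — and where I expect the main technical care — is that the residual subproblem in the lemma's statement is \emph{not} literally in the form of Theorem \ref{thm:kpswthm}, because the inner objective carries an extra quadratic weight $r_e$ alongside the $|\delta_e|^p$ term (from the smoothened log barrier), and because the quadratic coefficient attached to $\delta_e^2$ in $Q$ is $r_e + f_e^{2p-4}$ rather than just $f_e^{2p-4}$. This is exactly the mixed $\ell_2$\!-\!$\ell_p$ setting that \cite{KPSW19} handle: their preconditioners are built for quadratic plus $p$-power objectives with arbitrary positive quadratic coefficients, so the modification to allow an extra positive $r_e$ weight on $\delta_e^2$ is immediate. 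Putting this together establishes the convergence of the outer iteration and hence the lemma.
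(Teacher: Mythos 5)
Your proposal is correct and fills in exactly the standard iterative-refinement argument that the paper's one-line proof defers to by citing \cite{LFN18,KPSW19}. Your use of Lemma~\ref{lem:sandwhich} as the two-sided relative smoothness/strong-convexity bound, the scaled feasible candidate $\lambda\delta^{*}$ with $\lambda=2^{-O(p)}$ giving a $(1-2^{-O(p)})$ multiplicative decrease of $val(f)-OPT$ per outer step, and the resulting $N=2^{O(p)}\poly\log m=\widetilde O_p(1)$ solver calls to drive the gap below $1/2^{\poly\log m}$ are precisely the argument those references contain, including the observation that the residual problem sits in the KPSW19 form because the quadratic coefficient $r_e+f_e^{2p-4}$ is simply an arbitrary nonnegative weight.
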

The proof is similar to the proof of the iteration complexity of gradient descent for smooth and strongly convex function and it follows from \cite{LFN18,KPSW19}.
Note that since \cite{KPSW19} give an almost linear time solver for exactly the subproblem in the above lemma provided the resistances are quasipolynomially bounded, we are done. This is because Section D.1 in \cite{LS20} already proves that the resistances are quasipolynomially bounded.
\section{Conclusion}\label{sec:conclusion}
In this paper, we showed how to use steps inspired by potential reduction IPMs to solve max flow in directed graphs in  $O(m^{4/3+o(1)}U^{1/3})$ time. We believe our framework for taking the step corresponding to the maximum decrease of the potential function may be useful for other problems including $\ell_p$ norm minimization. In particular, can one set up a homotopy path for which steps are taken according to a potential function. Presumably if this can be done, this might also offer hints for how to use ideas corresponding to different homotopy paths induced by other potential functions (rather than the central path we consider) to solve max flow faster. Finally, there is no reason to believe that the procedure for selecting weight changes corresponding to the potential decrement being maximized to be the best way to change weights. This may lead to a faster algorithm as well if one can find another strategy which establishes tighter control on weight changes. A question along the way to such a strategy might be to understand how the potential decrement optimum changes as we change weights/resistances. Such an analog for change in energy of electrical flow as we change resistances is used in \cite{CKMST11,M16,LS20}. Another open problem that remains is obtaining faster algorithms for max flow on weighted graphs with logarithmic dependence on $U$ as opposed to the polynomial dependence in this paper.

\section*{Acknowledgements}
We would like to thank Jelena Diakonikolas, Yin Tat Lee, Yang Liu, Aaron Sidford and Daniel Spielman for helpful discussions. We also thank Jelani Nelson for several helpful suggestions regarding the presentation of the paper.


  \addcontentsline{toc}{section}{References}
  \bibliographystyle{amsalpha}
  \bibliography{main}

\appendix


\section{Missing Proofs}\label{app:missingProofs}
\begin{proof}{[of Lemma \ref{lem:cong}]} We follow the strategy used in the proof of Lemma \ref{lem:constcongrootm}. Recall that the problem we are trying to understand is
\begin{align*}
    \min\limits_{B^\top \hat{f}=\delta\chi}\Delta\Phi_w(f,\hat{f}) + W\left[\sum\limits_{e\in E}g_e(\hat{f})^p\right]^{1/p}
\end{align*}
where $g_e(\hat{f}) = (\hat{u}_e^+(f))^2\log\left(1-\frac{\hat{f}_e(f)}{\hat{u}^+_e(f)}\right) + \hat{u}_e^+(f)\hat{u}_e^-(f) \log\left(1+\frac{\hat{f}_e}{\hat{u}_e^-(f)}\right)$. As in Lemma \ref{lem:constcongrootm}, we will consider a flow $f'$ which sends $\frac{2\delta}{m}$ units of flow on each of the $m/2$ preconditioned edges. Certainly, the objective value of the above function at $\hat{f}$ will have a smaller value than that at $f'$. For the first term $\Delta\Phi_w(f,\hat{f})$, running the same argument as in Lemma \ref{lem:constcongrootm}, we get that \begin{align*}\Delta\Phi_w(f,\hat{f}) &\leq \|w\|_1 \left(\frac{42\delta}{F^*-F}\right)^2\\
&\leq 0.000071m^{2\eta}
\end{align*}
For the the second term, we use $\log(1-x)\leq -x + x^2$ and $\log(1+x) \leq x+x^2$, to get that $g_e(f)\leq \hat{f}_e^2\left(1+\frac{\hat{u}_e^+(f)}{\hat{u}_e^+(f)}\right)\leq 2\hat{f}_e^2$ where we have used that $\hat{u}^+_e(f)\leq \hat{u}_e^-(f)$. Now, since there is non-zero flow on the preconditioned edges, we get that
\begin{align*}
    W\left[\sum\limits_{e\in E}g_e(f')^p\right]^{1/p}&\leq 2W (\delta/m)^2 m^{o(1)}\\
    &\leq 2m^{6\eta+o(1)}\left(\frac{F^*-F}{5000m(m^{1/2-\eta})}\right)^2\\
    &\leq 0.0000004m^{8\eta-1+o(1)}U^2
\end{align*}
using $p=\sqrt{\log n}$, the fact that $F^*-F\leq mU$ and the value of $\delta=\frac{F^*-F}{5000m^{1/2-\eta}}$. Also using the value of $\eta$, we can see that this term is less than $0.0000001m^{2\eta}$.
Hence, combining the two, we get that the objective value at $\hat{f}$ is less than $0.000072m^{2\eta}$. As the objective function is made up of two non-negative quantities, we can obtain two inequalities using this upper bound by dropping one term from the objective value each time. For the second part, we ignore the first term of the objective function and lower bound the second term using the fact that $\log(1+x) \geq x+x^2/2$ and $\log(1-x)\geq -x+x^2/2$
\begin{align*}
0.000072m^{2\eta}&\geq W\left[\sum\limits_{e\in E}g_e(\hat{f})^p\right]^{1/p}\geq W|g_e(\hat{f})|\\
&\geq W \hat{f}_e^2 (1+\hat{u}_e^+(f)/\hat{u}_e(f))\\
&\geq W\hat{f}_e^2
\end{align*}
This gives us that $|\hat{f}_e|\leq 0.009 m^{-2\eta}$ by plugging in the value of $W=m^{6\eta}$

For the first part now, assume for the sake of contradiction that $\rho_e> 0.1$, otherwise we are done. Now, dropping the second term we want to establish that $\frac{1}{\hat{u}_e(f)} \leq 9 m^{2\eta}$, which we will do so by a proof similar to the proof of Lemma 4.3 in \cite{M16}. Now using the argument as in Lemma \ref{lem:constcongrootm}, we get for an edge $e=(u,v)$,

\begin{align*}
  0.000072m^{2\eta}&\geq \Delta\Phi_w(f,\hat{f})\\
  &\geq\frac{1}{2.2}\sum\limits_{e \in E} \hat{f}_e \left(\frac{w_e^+}{u_e^+-f_e-\hat{f}_e} - \frac{w_e^-}{u_e^-+f_e+\hat{f}_e} - \frac{w_e^+}{u_e^+-f_e} + \frac{w_e^-}{u_e^-+f_e}\right)\\
  &=\frac{1}{2.2}\hat{f}^\top B\hat{y}\\
  &=\frac{1}{2.2}\delta \chi^\top\hat{y}\\
  &=\frac{F^*-F}{11000m^{1/2-\eta}} \chi^\top \hat{y}\\
  &\geq \chi^\top \hat{y}/11000\\
  &=(\hat{y}_s-\hat{y}_t)/11000\\
  &\geq (\hat{y}_u-\hat{y}_v)/11000\\
  &= \frac{1}{11000}\left(\frac{w_e^+}{u_e^+-f_e-\hat{f}_e} - \frac{w_e^-}{u_e^-+f_e+\hat{f}_e} - \frac{w_e^+}{u_e^+-f_e} + \frac{w_e^-}{u_e^-+f_e}\right)\\
  &\geq \frac{9\hat{f}_e}{110000}\left(\frac{w_e^+}{(u_e^+-f_e)^2} + \frac{w_e^-}{(u_e^-+f_e)^2}  \right)\\
  &\geq \frac{9\rho_e}{110000\hat{u}_e(f)}\\
  &\geq \frac{0.9}{110000\hat{u}_e(f)}
\end{align*}
where the first and second equalities follows from optimality and feasibility conditions of the potential decrement problem respectively and the third inequality follows from the condition that we run the program while the flow left to augment is at least $m^{1/2-\eta}$. This implies that $1/\hat{u}_e(f)\leq 9m^{2\eta}$. Multiplying this with $|\hat{f}_e|\leq 0.009m^{-2\eta}$, we get that $\rho_e \leq 0.1$, which finishes the proof. We also need to argue the inequality $\hat{y}_s-\hat{y}_t\geq \hat{y}_u-\hat{y}_v$.  The optimality conditions of $$\hat{y}_u-\hat{y}_v=\hat{f}_e \left(\frac{w_e^+}{(u_e^+-f_e-\hat{f}_e)(u_e-f_e)} + \frac{w_e^-}{(u_e^-+f_e)(u_e^-+f_e+\hat{f}_e)}\right)$$
and noticing that the quantity in brackets in the right hand side above is non-negative, tells us that there is a fall in potential along the flow. This along with noticing that the sum of the potential difference in a directed cycle is zero, tells us that the graph induced by just the flow $\hat{f}$ is a DAG. Since, it's a DAG, it can be decomposed into disjoint $s-t$ paths along which flow is sent and every edge belongs to one of these paths. Hence, the potential difference across an edge is less than the potential difference across the whole path which is the potential difference between $s$ and $t$ and hence, we are done.

As before, all these arguments go through with the quadratically smoothened cases cases rather than the original function to still get the same bounds and since $\rho_e \leq 0.1$, the minimizers of the two are the same which completes the proof.
\end{proof}
\begin{proof}{[of Lemma \ref{lem:sandwhich}]}
Note that while we are solving for the quadratically smoothened version of the problem, we can assume we solve it for the non-smoothened version in the box corresponding to a congestion of at most $0.1$ as the extension is $C^2$ and  will ensure that any inequalities we need henceforth (upto the second order terms) are bounded as well.

There are two terms, one corresponding to the potential decrement term and the other is a similar expression but raised to the $p^{th}$ power. We tackle the first term first. This is easily done using Taylor's theorem. The function is $g(x+y) = -\log(1-(x+y)/u) - (x+y)/u$. Computing the first two derivatives with respect to $y$, we get that
$g'(x+y) = \frac{1}{u-x-y} - 1/u$ and $g''(x+y) = \frac{1}{(u-x-y)^2}$. Now, using Taylor's theorem, we get that
\begin{align*}
    g(x+y) &= g(x) + g'(x)y+\frac{1}{2}g''(x+\zeta)y^2\\
    &= g(x) + y\left(\frac{1}{u-x-y}-\frac{1}{u}\right) + y^2 \left(\frac{1}{(u-x-\zeta)^2}\right)
\end{align*}
for some $ \zeta$ such that $-u/10\leq x+\zeta \leq u/10$ which easily gives us the bound
\begin{align*}
    g(x) + y\left(\frac{1}{u-x-y}-\frac{1}{u}\right) + (9/10)^2 y^2 \left(\frac{1}{(u-x)^2}\right) \leq g(x+y) \leq g(x) + y\left(\frac{1}{u-x-y}-\frac{1}{u}\right) + (11/10)^2 y^2 \left(\frac{1}{(u-x)^2}\right)
\end{align*}
Similarly for $-\log(1+x/u)+x/u$.

Now, for the second term, we will largely follow the strategy of \cite{KPSW19}.
Now for the $p^{th}$ order term, we have a function
$g(x)=u_1^2 \log(1-x/u_1) + u_1u_2 \log(1-x/u_2)$. We first use Lemma \ref{lem:kpsw} with $f_i = g(x)$ and $\delta_i = g(x+y)-g(x)$ to get
\begin{align*}
    g(x+y)^p &\leq g(x)^p + p g(x)^{p-1}(g(x+y)-g(x)) + 2^{O(p)}(g(x)^{p-2}(g(x+y)-g(x))^2 + (g(x+y)-g(x))^p)
\end{align*}
Now, adding and subtracting $pg(x)^{p-1}yg'(x)$ from both sides and noticing that $g(x+y)-g(x)-yg'(x) \leq 0$ from concavity of $g$ , we get
\begin{align*}
g(x+y)^p &\leq g(x)^p + pyg(x)^{p-1}g'(x) + 2^{O(p)}(g(x)^{p-2}(g(x+y)-g(x))^2 + (g(x+y)-g(x))^p)
\end{align*}
Now, notice that using inequalities of $\log(1-x/u)$ and $\log(1+x/u)$, to get $x^2 \leq g(x) \leq 2 x^2$ and we also use Taylor's theorem get that $g(x+y)-g(x) \leq 10(|xy| + |y|^2)$
\begin{align*}
    g(x+y)^p &\leq g(x)^p + p yg(x)^{p-1}g'(x) + 2^{O(p)}(x^{2p-4}(x^2 y^2 + y^4) + 2^{p-1}(x^py^p + y^{2p})\\
    &\leq g(x)^p + p yg'(x) + 2^{O(p)}(x^{2p-2}y^2 + y^{2p})
\end{align*}
where we have used $(x+y)^p \leq 2^{p-1}(x^p + y^p)$ and that $y \leq x$ because that's the neighborhood we are considering. Beyond that neighborhood, we could just do the calculation with the quadratic extension parts (as we only used upto the second order information so that's still preserved)

The proof of the lower bound is similar.
\end{proof}

\end{document}